\documentclass[
  aps,
  prd,
  reprint,
  superscriptaddress,
  nofootinbib,
  floatfix
]{revtex4-2}

\usepackage[utf8]{inputenc}
\usepackage[english]{babel}

\usepackage{amsmath}
\usepackage{amsfonts}
\usepackage{amssymb}
\usepackage{amsthm}
\usepackage{mathtools}
\usepackage{graphicx}
\usepackage{booktabs}
\usepackage{array}
\usepackage{multirow}
\usepackage{xcolor}
\usepackage{listings}
\usepackage{algpseudocode}

\usepackage[
  colorlinks=true,
  allcolors=blue,
  pdfauthor={V. Pierro, V. Fiumara, V. Granata, G. Avallone},
  pdftitle={Thermal Noise in Optical Coatings: A Rigorous Opto-Mechanical Comparison of Nanolaminates and Isotropic Mixtures}
]{hyperref}

\definecolor{loopcolor}{RGB}{0, 100, 80}
\definecolor{codegray}{rgb}{0.5,0.5,0.5}
\definecolor{codepurple}{rgb}{0.58,0,0.82}
\definecolor{backcolour}{rgb}{0.95,0.95,0.92}

\algrenewcommand\algorithmicfor{\textcolor{loopcolor}{\textbf{for}}}
\algrenewcommand\algorithmicdo{\textcolor{loopcolor}{\textbf{do}}}
\algrenewcommand\algorithmicend{\textcolor{loopcolor}{\textbf{end}}}
\algrenewtext{EndFor}{\textcolor{loopcolor}{\textbf{end for}}}

\newtheorem{theorem}{Theorem}[section]
\newtheorem{lemma}[theorem]{Lemma}

\newtheorem{proposition}[theorem]{Proposition}

\theoremstyle{definition}

\theoremstyle{remark}
\newtheorem{remark}[theorem]{Remark}

\newcommand{\Imag}[1]{\operatorname{Im}\!\left[#1\right]}
\newcommand{\Real}[1]{\operatorname{Re}\!\left[#1\right]}

\begin{document}

\title{Thermal Noise in Optical Coatings: A Rigorous Opto-Mechanical Comparison of Nanolaminates and Isotropic Mixtures}

\author{Vincenzo Pierro}
\affiliation{Department of Engineering DING, University of Sannio, I-82100 Benevento, Italy}
\affiliation{INFN Sezione di Napoli, Gruppo Collegato di Salerno, I-84084 Fisciano (SA), Italy}

\author{Vincenzo Fiumara}
\affiliation{Department of Engineering, University of Basilicata, I-85100 Potenza, Italy}
\affiliation{INFN Sezione di Napoli, Gruppo Collegato di Salerno, I-84084 Fisciano (SA), Italy}

\author{Veronica Granata}
\affiliation{Dipartimento di Ingegneria Industriale, Elettronica e Meccanica (DIIEM), Universit\'{a} degli Studi di Roma Tre, I-00146 Roma, Italy}
\affiliation{INFN Sezione di Napoli, Gruppo Collegato di Salerno, I-84084 Fisciano (SA), Italy}

\author{Guerino Avallone}
\email[Corresponding author: ]{guerino.avallone@sa.infn.it}
\affiliation{Dipartimento di Fisica "E. R. Caianello", Universit\'{a} di Salerno, I-84084 Fisciano (SA), Italy}
\affiliation{INFN Sezione di Napoli, Gruppo Collegato di Salerno, I-84084 Fisciano (SA), Italy}


\begin{abstract}
We investigate whether a perfectly ordered periodic nanolaminate is intrinsically less mechanically 
dissipative than an isotropic co-sputtered mixture designed to provide the exact same optical function. 
Focusing on a benchmark room-temperature $\mathrm{SiO_2/TiO_2}$ material system, we compare coating Brownian
dissipation under strict parity of optical thickness and effective refractive index. The constituent amorphous phases are modeled using
complex Young's moduli and purely real Poisson's ratios, corresponding to a single-loss-angle approximation within each phase.
A priori, the outcome of this comparison is highly non-trivial due to competing topological effects: while the nanolaminate is optically optimal, its structural anisotropy forces in-plane 
shear deformations to operate at the highly dissipative Voigt upper bound, mechanically penalizing the structure compared to a well-mixed isotropic medium. 
The nanolaminate is treated deterministically through the exact 
long-wavelength optical (Wiener) and mechanical (Backus) effective-medium relations. 
Conversely, the isotropic mixture is described by realistic optical closures (Lorentz--Lorenz and Bruggeman) 
coupled to a complex mechanical self-consistent scheme (SCS). To assess the residual uncertainty associated with the unknown co-sputtered morphology, 
we explore the rigorous optical and mechanical admissible regions in the complex plane. 
We demonstrate that, owing to the small intrinsic optical and mechanical losses of the $\mathrm{SiO_2/TiO_2}$ pair, 
these rigorous admissible regions collapse into extremely narrow lenses, drastically restricting the microstructural uncertainty.
Moreover, across all realistic isotropic closures, the nanolaminate systematically 
exhibits lower mechanical dissipation than the corresponding co-sputtered mixture. 
Crucially, even when evaluated against the absolute thermodynamic lower bound of the isotropic state, this topological advantage remains strictly valid in the high effective refractive index regime.
We prove that the ordered architecture exploits the optical upper bound to drastically minimize 
the required volume fraction of the highly dissipative high-index phase, an effect that overwhelmingly overcompensates for its intrinsic mechanical anisotropy penalty.
\end{abstract}

\maketitle

\section{Introduction}
\label{sec:intro}

The mechanical dissipation of optical coatings remains a central limitation in precision interferometry. 
In high-stability optical systems, such as ground-based gravitational-wave detectors (e.g. Virgo \cite{Virgosite, virgo}, LIGO \cite{LIGOsite, ligo}, KAGRA\cite{KAGRAsite, kagra}), 
coating Brownian noise directly dictates the fundamental displacement noise floor in the most sensitive frequency bands \cite{Harry2002, Flaminio2010}. Consequently, 
the search for improved coating materials is a highly active field of research in order to go a step further than the already 
significant results obtained since the first detection of gravitational waves \cite{Abbott2016}. 

While significant efforts have recently been directed toward exploring novel low-loss
 amorphous phases materials, such as titania doped silica (TiO$_2$-doped SiO$_2$ \cite{Harry2006, McGhee_TiO2SiO2, Fazio2025}), 
 titania doped germania (TiO$_2$-doped GeO$_2$ \cite{VajenteSupp2021}) and advanced ternary systems 
 \cite{Pierro2021, Pierro2025}, experimentally optimizing the intrinsic loss angles of the constituent materials is only part of the solution. 
 It is equally crucial to rigorously address the role of topology and microstructure in determining the final macroscopic dissipative response of the coating.

The quest for optimal coating architectures has driven a decade-long debate within the gravitational-wave community regarding the best approach 
to synthesize materials with tailored refractive indices and low mechanical dissipation. Historically, the standard approach has been the co-deposition
 (co-sputtering) of different oxides to form a macroscopically homogeneous, amorphous isotropic mixture. However, an alternative and highly
 promising paradigm has been pioneered and extensively investigated: the use of perfectly ordered periodic nanolaminates, or 1D 
 photonic superlattices \cite{Pinto_G1301061, Pierro_ScienceDirect2023, Pinto_G1600665, Pinto_G1902307}. 

Recent advancements in plasma-assisted electron-beam deposition \cite{Neilson2024} have proven that such ultra-thin, perfectly ordered superlattices can be reliably manufactured with sub-nanometer precision 
\cite{VGranata_G2401236}, yielding metamaterials that strictly preserve their structural periodicity and amorphous nature of the nanolayered constituent materials even after 
high-temperature annealing \cite{Durante2024}, as confirmed using structural and optical characterization techniques similar to those employed 
for $\mathrm{TiO_2{:}Ta_2O_5}$ coatings in Ref.~\cite{Durante_CQG2024}.

By alternating sub-wavelength layers of pure low-index and high-index materials, it is possible to synthesize an effective metamaterial that matches the optical 
properties of a co-sputtered mixture, while fundamentally altering its structural and mechanical topology. 
Over the last decade, several experimental and theoretical 
campaigns have demonstrated the unique advantages of Nanolayered Optical Films (NLOF). Crucially, embedding a high-index material (e.g., TiO$_2$) within ultra-thin layers buffered 
by a glass-former (e.g., SiO$_2$) geometrically frustrates crystallization. This allows the composite to withstand significantly higher annealing 
temperatures without developing scattering crystallites, 
which in turn drastically reduces the internal mechanical friction \cite{Durante2023, Pinto_G2400526, Durante_Surfin2023_vacancies, Durante_Nano2021}.
Furthermore, nanolayering has been shown to flatten and shift the low-temperature mechanical loss peaks (cryo-peaks) \cite{Pinto_G2001499}, 
making this architecture a prime candidate for third-generation cryogenic detectors like the Einstein Telescope \cite{ETsite} and Cosmic Explorer \cite{CEsite}.

While the experimental and technological advantages of nanolaminates are increasingly well documented, their theoretical mechanical modeling 
has historically relied on simplified scalar approximations. Early estimates of coating thermal noise often treated the multilayer stack as an equivalent isotropic medium,
attempting to describe its dissipation through a single effective loss angle $\phi_{eff}$ and a scalar Young's modulus \cite{Pinto_G1401358, Pinto_G1200976}. 
However, such scalar homogenizations fundamentally fail to capture the intrinsic mechanical anisotropy of the layered structure. 

Under the Gaussian pressure profile dictated by Levin's direct approach \cite{Levin1998},
 a nanolaminate experiences a complex 3D stress field where in-plane shear and out-of-plane compressions dissipate energy at vastly different rates. 
 Therefore, to definitively assess the thermal noise performance of these architectures, one must abandon scalar loss angles and directly evaluate the elastic energy 
dissipated within the exact anisotropic stiffness tensor.

Building upon this extensive experimental and phenomenological background, the purpose of the present theoretical work is to definitively compare these two 
limiting architectures -- the disordered isotropic mixture and the perfectly ordered nanolaminate -- for the specific SiO$_2$/TiO$_2$ system under strict parity of optical design.

More precisely, we ask whether a nanolaminate is intrinsically less dissipative than an isotropic co-sputtered mixture designed to exhibit
the exact same effective refractive index and optical thickness. 
We focus specifically on coating Brownian thermal noise because this internal friction mechanism
represents the dominant fundamental noise source limiting the astrophysical sensitivity of current ground-based gravitational-wave interferometers 
(see \cite{Bodya} for a comprehensive reference on the subject) in their most sensitive frequency band ($\sim 10$--$300\,\mathrm{Hz}$). 
The opto-mechanical comparison is performed using the specific benchmark room-temperature properties of the $\mathrm{SiO_2/TiO_2}$
pair listed in Table~\ref{tab:material_properties}, assuming ideal
sharp interfaces and homogeneous internal dissipation within the individual constituent amorphous phases. The constituent amorphous phases are modeled using
complex Young's moduli and strictly real Poisson's ratios, thereby
excluding a dissipative component of the Poisson's ratios.

This comparison is nontrivial because the ordered layered structure is naturally anisotropic at the effective-medium level, whereas the isotropic co-sputtered morphology is not uniquely defined 
and must be described through effective-medium closures and admissible microstructural bounds \cite{Pinto_G1401358, Pinto_G1200976}. The nanolaminate can be treated deterministically through 
the exact long-wavelength optical and mechanical effective-medium relations for layered media, namely the Wiener optical description \cite{Wiener1912, BornWolf1999} 
and the Backus mechanical averaging procedure \cite{Backus1962}. 

By contrast, the isotropic co-sputtered mixture requires a more indirect construction. 
The phase fraction is first inferred from realistic optical closures, such as the Lorentz--Lorenz (L--L) and 
Bruggeman-type models \cite{Bruggeman1935, Polder1946}. This fraction is then mapped into effective complex bulk and shear moduli through a nonlinear self-consistent mechanical 
scheme \cite{Hill1965, Budiansky1965, Barta1984}. 
To assess the residual uncertainty associated with the unknown isotropic microstructure, we further consider rigorous optical and mechanical admissible regions in the complex plane. 
For the optical response, we rely on the exact bounds for complex dielectric permittivities developed by Bergman and Milton \cite{Bergman1980, Bergman1982, Milton1980, Milton1981}. 
For the mechanical response, we employ the extension of the Hashin-Shtrikman variational principles to complex viscoelastic moduli formulated by Gibiansky, Milton, 
and Berryman \cite{Gibiansky1993, Milton1997, Gibiansky1999}.

A central result of the present analysis is that, for the low-loss SiO$_2$/TiO$_2$ system, the exact admissible optical and mechanical regions collapse numerically into extremely narrow lenses. As a consequence, the rigorous bounds become very close to simple reduced geometric envelopes: short boundary segments in the optical plane 
and a narrow four-point Hashin-Shtrikman-Walpole (HSW)-like quadrilateral in the mechanical plane. This observation is important because it explains why the simplified full phase-space scan developed in this work remains quantitatively very close to the exact admissible set, despite not being presented as a fully rigorous coupled optical--mechanical extremal construction for a single realizable microstructure.

The paper is organized as follows. Section \ref{sec:thermal_noise_theory} describes the thermal noise model based on Levin's approach \cite{Levin1998}.  
Section~\ref{sec:nanolaminate} establishes the ordered nanolaminate as the deterministic reference architecture and derives its effective optical 
and mechanical properties. Section~\ref{sec:isotropic_mixtures} introduces the realistic isotropic-mixture closures used to model co-sputtered amorphous films. 
Section~\ref{sec:bounds} discusses the rigorous admissible regions for the optical and mechanical effective properties and explains their low-loss reduction in the present system. Section~\ref{sec:framework} combines these ingredients into the computational framework used for the full phase-space scan. 
In Section~\ref{sec:results} we present the dissipation comparison between the two architectures and show that, within all realistic isotropic closures considered here, 
the nanolaminate remains less dissipative than the corresponding isotropic co-sputtered mixture, especially in the high-index region.
Section~\ref{sec:conclusions} summarizes the main findings and discusses the broader implications of this topological advantage
for the design of next-generation gravitational-wave detectors.
Finally, in Appendices~\ref{app:min_fH} and \ref{app:algebraic_certification} it is mathematically demonstrated that for any target refractive index
the nanolaminate requires a strictly smaller volume fraction of the highly dissipative high-index oxide (TiO$_2$) than any admissible isotropic morphology. 

\section{Thermal Noise Evaluation via Levin's Approach}
\label{sec:thermal_noise_theory}

The ultimate metric for comparing different coating architectures is their contribution to the Brownian thermal noise of the mirror. 
Among the pioneering contributors to the analysis of
thermal noise in precision optical systems are Levin and Braginsky
\cite{Levin1998,Braginsky1999,Braginsky2000}. In the present work, we
focus specifically on the coating Brownian contribution. According to
Levin's theorem, which provides an operational formulation derived from
the Fluctuation-Dissipation Theorem, the displacement spectral density
is directly related to the time-averaged mechanical power dissipated
when a virtual, harmonically oscillating pressure profile $p(r)$
(a \textit{gedanken} force field) is applied to the mirror surface.
We therefore compute the associated dissipated power $W_{\rm diss}$
under this virtual pressure.

For interferometric applications, the virtual pressure matches the intensity profile of the 
interrogating laser beam. While specific experimental metrology setups may employ non-Gaussian or mesa beam profiles \cite{Bodya}, 
the fundamental Gaussian mode provides the canonical baseline for ground-based gravitational-wave detectors.

In the present calculations, we adopt the standard Gaussian profile:
\begin{equation}
    p(r) = \frac{2F_0}{\pi w_0^2} \exp\left(-\frac{2 r^2}{w_0^2}\right),
\end{equation}
where $F_0$ is the virtual force amplitude and $w_0$ is the beam radius (i.e. the laser waist on the mirror). 

For a thin coating of thickness $d_{coat}$ deposited on a semi-infinite substrate, the complex strain energy density can be analytically integrated. As formalized in Refs. \cite{FejerT2100186, VajenteSupp2021}, the macroscopic mechanical response of any (transversely isotropic or isotropic) coating can be fully encapsulated by four effective parameters: $A$, $B$, $D$, and $R_{mix}$. 

By defining the elastic moduli of the constituent materials as complex dynamic quantities, 
the mechanical dissipation is rigorously captured by the imaginary parts of these effective parameters. 
Assuming a perfectly elastic and lossless semi-infinite substrate,
the total power dissipated within the coating is given by:
\begin{equation} \label{eq:Wdiss}
\begin{split}
    W_{diss} = \omega d_{coat} \bigg[ & \frac{1}{2}\Imag{A}\,I_{SD} + \frac{1}{2}\Imag{B}\,I_{\Delta} \\
    & - \frac{1}{2}\Imag{D}\,I_{T} - R_{mix}\,I_{SDT} \bigg] ,
\end{split}
\end{equation}
where $\omega= 2 \pi f$ is the mechanical angular frequency. The terms $I_{SD}$, $I_{\Delta}$, $I_{T}$, and $I_{SDT}$ are geometric integrals over the substrate deformations, defined by the Lam\'e coefficients of the substrate ($\lambda_s$, $\mu_s$):
\begin{equation}
\begin{aligned}
    I_{SD} &= I_{\Delta} = \frac{F_0^2}{8\pi(\lambda_s + \mu_s)^2 w_0^2}, \\
    I_T &= \frac{2F_0^2}{\pi w_0^2}, \qquad 
    I_{SDT} = \frac{F_0^2}{2\pi(\lambda_s + \mu_s)w_0^2} .
\end{aligned}
\end{equation}
Note that a minus sign precedes $\Imag{D}$ because $D$ acts as a compliance-like quantity, possessing an imaginary part of opposite sign compared to the stiffness-like coefficients $A$ and $B$.
Furthermore, unlike the complex moduli $A$, $B$, and $D$, the cross-coupling term $R_{mix}$ appears without the imaginary operator because it is inherently 
defined as a strictly dissipative quantity (specifically, the imaginary part of the ratio between off-diagonal and diagonal stiffness components, as detailed in subsequent sections).

The primary goal of the theoretical frameworks developed in the following sections is to accurately compute the parameters $A$, $B$, $D$, and $R_{mix}$ for both the perfectly ordered nanolaminate (Sec.~\ref{sec:nanolaminate}) and the disordered isotropic mixture (Sec.~\ref{sec:isotropic_mixtures}), thereby enabling a direct and rigorous evaluation of Eq.~(\ref{eq:Wdiss}).

\section{Effective Properties of Nanolaminate}
\label{sec:nanolaminate}

Before addressing the complex microgeometries of co-deposited isotropic mixtures, we establish the theoretical baseline for perfectly ordered, periodically 
layered structures, commonly referred to as nanolaminates. 
We assume the individual layer thicknesses $d_L$ and $d_H$ are deeply sub-wavelength ($d_{L,H} \ll \lambda_0$ where $\lambda_0$ is the vacuum wavelength of the laser), 
allowing for a rigorous homogenization, while the total physical thickness of the stack $d_{coat}$ is constrained by the macroscopic optical design. 
When the thickness of
 individual layers is significantly smaller than the wavelength of the probing field (optical or mechanical), the composite behaves macroscopically as a 
homogeneous, but anisotropic, effective medium. It should be emphasized that the anisotropy of the nanolaminate is purely structural: each constituent phase is locally isotropic, 
and the effective anisotropic response emerges solely from the periodic stacking geometry.
As schematically shown in Fig.~\ref{fig:nanolaminate_schematic}, the ordered coating is modeled as a periodic alternation of low and high index materials sublayers deposited on the substrate.

\begin{figure}[t]
    \centering
    \includegraphics[width=\columnwidth]{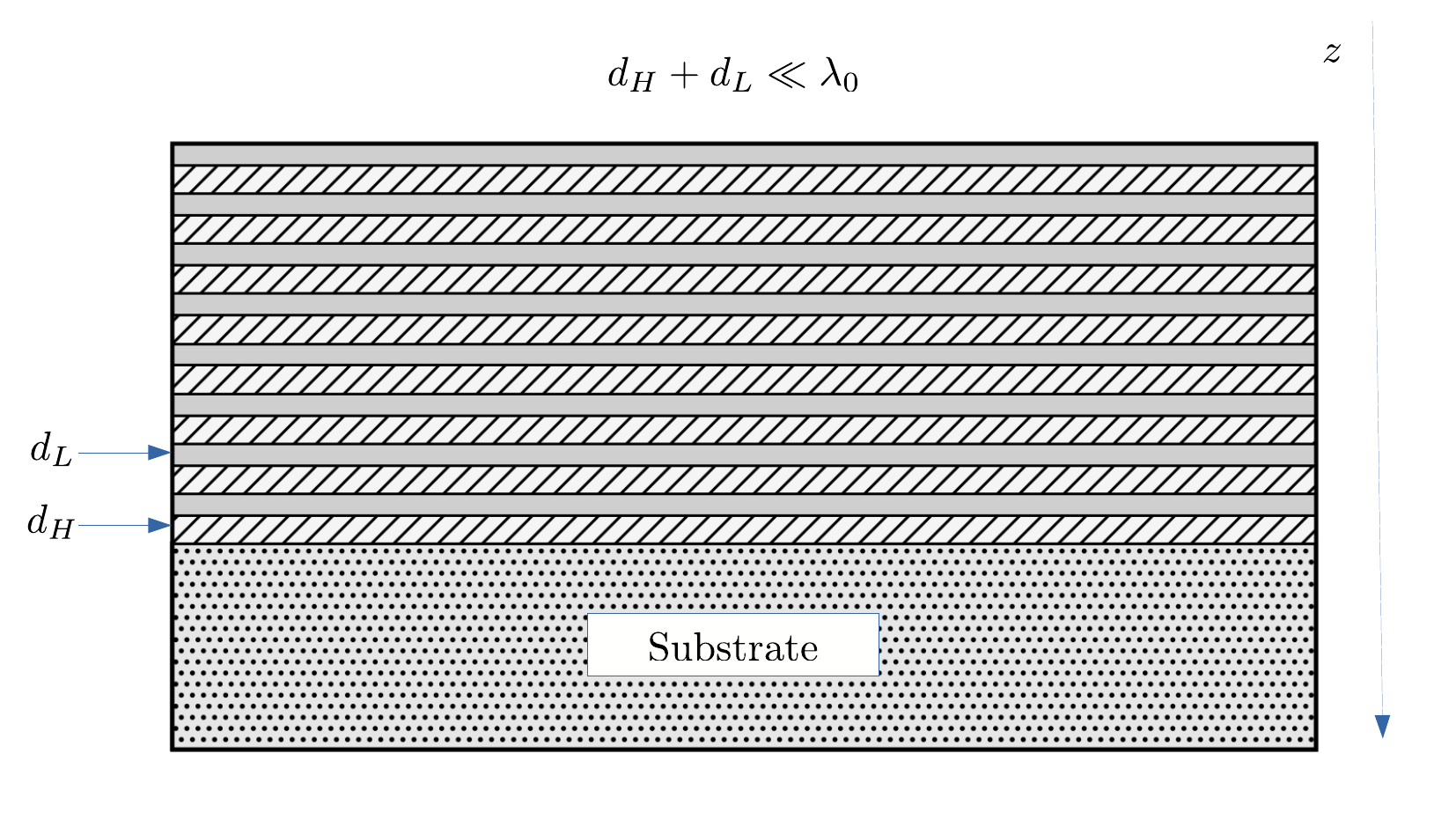}
    \caption{Schematic representation of the ordered periodic nanolaminate deposited on a fused-silica substrate. 
    The coating is modeled as an alternating sequence of low-index and high-index layers with period $d_H+ d_L \ll \lambda_0$, so that the structure can be 
    treated in the long-wavelength limit as an effective transversely isotropic medium.
    On the right-hand side of the figure, the $z$-axis is shown entering the laminated nano-structure and indicating the direction of the incident electromagnetic field.}
    \label{fig:nanolaminate_schematic}
\end{figure}

Recent morphological characterizations via Scanning Transmission Electron Microscopy and X-Ray Reflectivity 
have definitively proven that such ultra-thin periodic structures can be physically realized with extremely sharp interfaces and negligible 
interdiffusion \cite{Durante2024}, fully justifying the application of exact stratified effective-medium relations.

\subsection{Optical Response of Layered Media}

Optically, a nanolaminate composed of two isotropic phases behaves as a uniaxial crystal. The effective complex dielectric permittivity exhibits two distinct principal components depending on the polarization of the electric field relative to the layer interfaces. The classical optical limits for such stratified structures are defined by the Wiener bounds \cite{Wiener1912}, extensively detailed in standard electromagnetic theory for layered media \cite{sihvola2016,BornWolf1999}.

When the electric field is polarized parallel to the layers (in-plane), the effective permittivity $\varepsilon_{\parallel}$ is governed by the arithmetic mean:
\begin{equation}
    \varepsilon_{\parallel} = f_H \varepsilon_H + f_L \varepsilon_L
    \label{eq:wigbound}
\end{equation}
where $f_H = 1 - f_L$ is the volumetric fraction of high refractive index medium (H). Conversely, when the electric field is polarized perpendicular to the layers (out-of-plane), the effective permittivity $\varepsilon_{\perp}$ follows the harmonic mean (Reuss limit):
\begin{equation}
    \varepsilon_{\perp} = \left( \frac{f_H}{\varepsilon_H} + \frac{f_L}{\varepsilon_L} \right)^{-1}.
    \label{eq:reussbound}
\end{equation}
The exact analytical expressions (\ref{eq:wigbound}) (\ref{eq:reussbound}) represent the extreme bounds of structural optical anisotropy, providing a deterministic reference 
against which the optical response of more disordered morphologies can be evaluated.
 Under standard (normal incidence) reflectometric measurements of thin films deposited on flat substrates---reproducing the normal incidence of laser light on the coating surface of end masses in interferometric gravitational-wave detectors---the primary optical response is determined by the in-plane refractive index, $\tilde{n}_{\parallel} = \sqrt{\varepsilon_{\parallel}}$. 
Therefore, this effective index dictates the physical thickness $d_{\text{coat}}$ required to achieve a specific optical function, such as a high-reflectivity Bragg mirror.

For a target operating vacuum wavelength 
$\lambda_0$ (e.g., $\lambda_0 = 1064\text{ nm}$ for current gravitational-wave interferometers), the quarter-wave optical thickness 
condition is strictly defined as $d_{coat} = \lambda_0 / (4 \Real{\tilde{n}_{\parallel}})$.

\subsection{Mechanical Backus Averaging and Thermal Noise Parameters}

Symmetrically to the optical domain, the mechanical response of a nanolaminate is highly direction-dependent. A stratified medium composed of two isotropic materials (with stiffness components $c_{11}^{(i)}$, $c_{12}^{(i)}$, and $c_{44}^{(i)}$ for phases $i \in \{L,H\}$) macroscopically behaves as a transversely isotropic material with a vertical axis of symmetry.

The exact static effective stiffness tensor $\mathbf{C}^{NL}$ for this layered architecture was rigorously derived by Backus \cite{Backus1962}. Utilizing the volumetric averaging operator $\langle x \rangle = f_H x_H + f_L x_L$, the effective elastic constants governing the nanolaminate are algebraically defined as follows.

The out-of-plane compressional stiffness ($C_{33}$) is strictly governed by the harmonic average of the individual longitudinal stiffnesses:
\begin{equation}
    C_{33} = \langle c_{11}^{-1} \rangle^{-1} .
\end{equation}

The coupling between out-of-plane and in-plane deformations is given by:
\begin{equation}
    C_{13} = C_{33} \left\langle \frac{c_{12}}{c_{11}} \right\rangle .
\end{equation}

The in-plane compressional and transverse stiffnesses ($C_{11}$ and $C_{12}$) are mathematically more coupled, combining arithmetic averages of the plane-strain moduli with a correction proportional to the out-of-plane coupling:
\begin{equation}
    C_{11} = \left\langle c_{11} - \frac{c_{12}^2}{c_{11}} \right\rangle + C_{33} \left\langle \frac{c_{12}}{c_{11}} \right\rangle^2
\end{equation}
\begin{equation}
    C_{12} = \left\langle c_{12} - \frac{c_{12}^2}{c_{11}} \right\rangle + C_{33} \left\langle \frac{c_{12}}{c_{11}} \right\rangle^2 .
\end{equation}

Finally, the in-plane shear modulus ($C_{66}$) obeys the Voigt limit \cite{Voigt1889} (arithmetic mean):
\begin{equation}
    C_{66} = \langle c_{44} \rangle .
\end{equation}

To evaluate the dissipated power using Eq.~(\ref{eq:Wdiss}), we map these tensorial components into the specific effective-medium parameters $A$, $B$, and $D$ formalized for interferometric mirrors \cite{FejerT2100186}:
\begin{equation}
\begin{aligned}
    A_{NL} &= \frac{1}{2} \left( C_{11} + C_{12} - \frac{2 C_{13}^2}{C_{33}} \right), \\
    B_{NL} &= C_{66}, \qquad D_{NL} = \frac{1}{C_{33}} .
\end{aligned}
\end{equation}
Additionally, the cross-coupling term $R_{mix}$ is defined as the imaginary part of the ratio between the off-diagonal and diagonal out-of-plane stiffness components:
\begin{equation}
    R_{mix}^{NL} = \Imag{\frac{C_{13}}{C_{33}}} .
    \label{eq:Rmix_NL}
\end{equation}
Under the assumptions adopted in this work -- namely, complex scalar Young's moduli and purely real Poisson's ratios for the constituent 
phases -- this coupling coefficient vanishes identically for the nanolaminate architecture ($R_{mix}^{NL} = 0$) \cite{FejerT2100186}.
These mapped parameters establish the absolute theoretical limits of structural anisotropy for the composite. 
By evaluating the complex energy density using these Backus-derived coefficients, the nanolaminate architecture provides a 
deterministic minimum-dissipation baseline against which the behavior of disordered isotropic mixtures can be quantitatively compared.

Crucially, to account for the intrinsic mechanical dissipation of the constituent materials, we invoke the elastic-viscoelastic correspondence principle. 
The constituent stiffnesses are defined as complex dynamic moduli, $c_{ij}^* = c_{ij}' + i c_{ij}'' = c_{ij}'(1 + i\phi_{ij})$, where $\phi_{ij}$ represents the specific mechanical loss angle of the phase.

Because the correspondence principle preserves the algebraic structure of the boundary conditions, the Backus averaging operators and the subsequent mapping into the $A_{NL}$, $B_{NL}$, and $D_{NL}$ parameters are strictly valid in the complex plane. In our computational framework, these effective parameters are evaluated using full complex arithmetic. This avoids the truncation errors inherent in first-order Taylor expansions and allows the effective mechanical dissipation of the composite to be determined directly from the imaginary components of the final macroscopic parameters.

\subsubsection{Validity of the Quasi-Static Approximation}

The exactness of the Backus averaging method relies on the condition $k_{sound} l' \ll 1$, where $k_{sound}$ is the acoustic wave number 
and $l'$ is the characteristic thickness of the layers. In our case, the frequency of interest for macroscopic mechanical perturbations is on the order of $f = 100\,\mathrm{Hz}$.

Given the macroscopic real part of Young's moduli of $\mathrm{SiO}_2$ and $\mathrm{TiO}_2$ ($72\,\mathrm{GPa}$ and $165\,\mathrm{GPa}$, respectively) and their standard densities, 
the acoustic wave velocity $v$ in both phases exceeds $5000\,\mathrm{m/s}$. 
Consequently, the acoustic wavelength at $100\,\mathrm{Hz}$ is extremely large ($\lambda_{sound} = v_{sound}/f \approx 50\,\mathrm{m}$).

For a typical nanolaminate with layer thicknesses $l'$ in the nanometer regime ($10^{-9}$--$10^{-8}\,\mathrm{m}$), the dimensionless parameter $k_{sound} l'$ is on the order of $10^{-8}$.
Therefore, the elastic field variation across individual layers is negligible, placing the analysis deeply within the quasi-static regime.
In this long-wavelength limit, the structural anisotropy is governed by the zero-order effective static moduli, and the Backus limits provide an exact analytical 
representation of the composite mechanics without dynamic scattering corrections.

\section{Effective Medium Approximations for Realistic Isotropic Mixtures}
\label{sec:isotropic_mixtures}

The characterization of co-deposited amorphous thin films requires the estimation of effective properties from constituent phases. 
In experimental scenarios, where microstructural details are often inaccessible, effective-medium theories (EMTs) serve as the standard predictive framework \cite{sihvola2016}. These models relate the macroscopic response of the composite to the volume fractions of its constituents by assuming a specific idealized topology. This section outlines the analytical models employed to establish a realistic baseline for the optical and mechanical properties of the isotropic mixture.

\subsection{Optical Effective-Medium Models}
\label{subsec:Optical}

In typical experimental scenarios, the effective real refractive index of the composite film ($n_{eff}$) is a known parameter, routinely measured via ellipsometry. 
To mathematically link the optical response to the mechanical one, we invert the classical optical EMTs to extract the effective volume fraction of the inclusions ($f_H$).

The first model considered is the L--L relation, mathematically equivalent to the Maxwell--Garnett approximation for vacuum-embedded inclusions. 
For a two-phase mixture with permittivities $\varepsilon_H$, $\varepsilon_L$, and effective permittivity $\varepsilon_{eff} = \tilde{n}_{eff}^2$, the L--L equation is expressed as
\begin{equation}
    \frac{\varepsilon_{eff} - 1}{\varepsilon_{eff} + 2} = f_L \frac{\varepsilon_L - 1}{\varepsilon_L + 2} + f_H \frac{\varepsilon_H - 1}{\varepsilon_H + 2} ,
\end{equation}
where $f_H = 1 - f_L$ is the volumetric fraction of phase 2.

To account for a fully symmetric topology where neither phase is distinctly the host matrix, we also use the \textbf{Bruggeman symmetric effective-medium theory} \cite{Bruggeman1935}. For spherical inclusions, the standard Bruggeman equation imposes
\begin{equation}
    f_L \frac{\varepsilon_L - \varepsilon_{eff}}{\varepsilon_L + 2\varepsilon_{eff}} + f_H \frac{\varepsilon_H - \varepsilon_{eff}}{\varepsilon_H + 2\varepsilon_{eff}} = 0 .
\end{equation}
Furthermore, to account for potential morphological anisotropies at the nanoscale, we implement a \textbf{generalized Bruggeman model} incorporating a depolarization (shape) factor $L$ \cite{Polder1946}:
\begin{equation}
    f_L \frac{\varepsilon_L - \varepsilon_{eff}}{\varepsilon_{eff} + L(\varepsilon_L - \varepsilon_{eff})} + f_H \frac{\varepsilon_H - \varepsilon_{eff}}{\varepsilon_{eff} + L(\varepsilon_H - \varepsilon_{eff})} = 0 .
\end{equation}
The case $L = 1/3$ recovers the spherical case, $L < 1/3$ represents more prolate inclusions, and $L > 1/3$ represents more oblate inclusions. 
By numerically inverting these equations, we compute the specific volume fractions corresponding to a given target real refractive index.
As rigorously proven in Appendix~\ref{app:min_fH}, for any target refractive index, the high-index volume fraction required by the nanolaminate is 
strictly smaller than that of any admissible isotropic morphology.

\subsection{Mechanical Self-Consistent Scheme}

Consistent with the treatment of the nanolaminate architecture, the constituent properties are introduced into the SCS as complex dynamic moduli. 
Following Barta's approach \cite{Barta1984}, this highly nonlinear coupled system is numerically solved in our framework using extended arithmetic precision in the complex plane. 
By avoiding linear approximations for the loss angles, the roots of these equations inherently yield both the real elastic constants (storage moduli) 
and the associated viscoelastic losses (loss moduli) of the effective isotropic medium.

Once the volume fraction is determined optically (as shown in Section \ref{subsec:Optical}), 
the effective mechanical properties (complex bulk modulus $K_{eff}$ and shear modulus $\mu_{eff}$) of the isotropic mixture are evaluated using the 
SCS. Originally established independently by Hill \cite{Hill1965} and Budiansky \cite{Budiansky1965}, the practical numerical formulation used here follows Barta \cite{Barta1984}. 
Unlike simple mixing rules, the SCS implicitly couples the bulk and shear responses, making it suitable for densely packed, strongly interacting amorphous mixtures. 
The effective moduli are found by simultaneously solving
\begin{equation}
    f_L \frac{K_L - K_{eff}}{K_{eff} + \alpha(K_L - K_{eff})} + f_H \frac{K_H - K_{eff}}{K_{eff} + \alpha(K_H - K_{eff})} = 0
\end{equation}
\begin{equation}
    f_L \frac{\mu_L - \mu_{eff}}{\mu_{eff} + \beta(\mu_L - \mu_{eff})} + f_H \frac{\mu_H - \mu_{eff}}{\mu_{eff} + \beta(\mu_H - \mu_{eff})} = 0 ,
\end{equation}
where the coupling coefficients $\alpha$ and $\beta$ depend on the effective properties themselves:
\begin{equation}
    \alpha = \frac{3K_{eff}}{3K_{eff} + 4\mu_{eff}}, \qquad \beta = \frac{6(K_{eff} + 2\mu_{eff})}{5(3K_{eff} + 4\mu_{eff})} .
\end{equation}
The system is solved directly in the complex plane, yielding both the real elastic constants and the associated viscoelastic losses of the effective isotropic medium.

Here, the subscripts $L$ and $H$ denote the low-index and high-index constituent phases, respectively. 
Their isotropic elastic properties are introduced through the complex bulk and shear moduli
$K_L$, $K_H$, $\mu_L$, and $\mu_H$, obtained from the corresponding Young's moduli
$Y_L^*$, $Y_H^*$ and Poisson ratios $\nu_L$, $\nu_H$ according to
\begin{equation}
    K_i = \frac{Y_i^*}{3(1-2\nu_i)},
    \qquad
    \mu_i = \frac{Y_i^*}{2(1+\nu_i)},
    \qquad i \in \{L,H\}.
\end{equation}
In the present work, the Young's moduli are treated as complex quantities,
$Y_i^* = Y_i(1+i\phi_i)$, $\phi_i$ being the mechanical loss angle, while the Poisson ratios are taken to be real.

A significant challenge in applying multi-parameter
viscoelastic formulations lies in the experimental determination of
the bulk ($\phi_{K,i}$) and shear ($\phi_{\mu,i}$) loss angles
separately. Many previous coating Brownian-noise models have adopted a
single loss angle for each constituent,
$\phi_{K,i}=\phi_{\mu,i}=\phi_i$, as a simplifying baseline. Generalized
viscoelastic formulations, however, allow the two dissipation channels
to be treated independently, and there is no first-principles reason
requiring their equality \cite{FejerT2100186,Vajente2020PRD}.
Experimental studies aimed at separating bulk and shear losses provide
evidence that the two channels may differ, but they have not yet
produced a mutually consistent and transferable set of loss parameters.
In particular, measurements on titania-doped tantala coatings
\cite{Vajente2020PRD,Abernathy2018PLA} obtained different results for
the relative magnitudes and frequency dependences of the bulk and shear
losses. Moreover, no constituent-specific set of both
$\phi_{K,i}$ and $\phi_{\mu,i}$ is available for the amorphous
$\mathrm{SiO_2}$ and $\mathrm{TiO_2}$ materials considered here.
We therefore adopt purely real constituent Poisson's ratios as a
controlled single-loss-angle baseline in the absence of consistent
experimental inputs.

From the resulting effective isotropic moduli $K_{eff}$ and $\mu_{eff}$, the stiffness tensor components are simply 
$C_{11}^{ISO} = K_{eff} + \frac{4}{3}\mu_{eff}$, $C_{12}^{ISO} = K_{eff} - \frac{2}{3}\mu_{eff}$, and $C_{66}^{ISO} = \mu_{eff}$. 

Consequently, the parameters required for the thermal noise evaluation Eq.~(\ref{eq:Wdiss}) are mapped as follows:
\begin{equation}
\begin{aligned}
    A_{ISO} &= \frac{1}{2}\left(C_{11}^{ISO}+C_{12}^{ISO}-2\frac{(C_{12}^{ISO})^2}{C_{11}^{ISO}}\right), \\
    B_{ISO} &= C_{66}^{ISO}, \qquad D_{ISO} = \frac{1}{C_{11}^{ISO}} .
\end{aligned}
\end{equation}
Unlike the nanolaminate case, the cross-coupling coefficient does not vanish for an isotropic mixture and is explicitly retained as a dissipative quantity:
\begin{equation}
    R_{mix}^{ISO} = \Imag{\frac{C_{12}^{ISO}}{C_{11}^{ISO}}} .
\end{equation}

\begin{table*}[t]
\centering
\begin{tabular}{@{}lccccc@{}}
\toprule
\multirow{2}{*}{\textbf{Material}} & \multicolumn{3}{c}{\textbf{Mechanical Properties}} & \multicolumn{2}{c}{\textbf{Optical Properties}} \\ \cmidrule(lr){2-4} \cmidrule(l){5-6}
 & $Y$ [GPa] & $\phi$ & $\nu$ & $n$ & $\kappa$ \\ \midrule
Silica (SiO$_2$)  & 72.0  & $0.5 \times 10^{-4}$ & 0.17 & 1.45 & $1.0 \times 10^{-8}$ \\
Titania (TiO$_2$) & 165.0 & $1.4 \times 10^{-4}$ & 0.28 & 2.47 & $1.0 \times 10^{-3}$ \\ \bottomrule
\end{tabular}
\caption{Mechanical and optical parameters of the constituent materials used in the theoretical bounding models.
The silica parameters constitute representative benchmark values for high-quality sputtered amorphous silica
\cite{Flaminio2010,Bodya}. The optical parameters assigned to the amorphous $\mathrm{TiO_2}$ constituent were
obtained from a refined analysis of the spectroscopic-ellipsometry data acquired during our experimental campaign on
$\mathrm{SiO_2/TiO_2}$ nanolayer samples \cite{Durante2024,Magnozzi2018}. The mechanical parameters assigned to
$\mathrm{TiO_2}$ are nominal room-temperature baseline inputs,
consistent with the broad ranges considered for amorphous high-index
coating materials in Ref.~\cite{Bodya}. The values listed in this table
should therefore not be interpreted as a parameter set measured jointly
on a single specimen.
The complex Young's modulus is defined as $Y^*_i = Y_i(1 + i\phi_i)$, where $Y_i$ is the storage modulus and $\phi_i$ 
is the mechanical loss angle of the $i$-th material. The Poisson ratios $\nu_i$ are taken to be real.
The complex refractive index is defined as $\tilde{n} = n + i\kappa$, where $n$ is the real refractive index and $\kappa$ is the extinction coefficient.}
\label{tab:material_properties}
\end{table*}

\section{Rigorous Bounds and Low-Loss Reduction of the Admissible Regions}
\label{sec:bounds}

The rigorous optical and mechanical bounds introduced in this section are not used merely as abstract
extremal results. Their main role is to quantify how much microstructural uncertainty remains for the
isotropic co-sputtered mixture once the constituent properties in Table~\ref{tab:material_properties}
are fixed. 

\subsection{Theoretical Bounds for the Complex Refractive Index}

To accurately model the optical response of the two-phase composite mixture (e.g., SiO$_2$ and TiO$_2$), it is
necessary to determine its effective complex refractive index, $\tilde{n}_{eff} = n_{eff} + i \kappa_{eff}$, where
$n_{eff}$ is the real refractive index and $\kappa_{eff}$ is the extinction coefficient. The complex refractive
index is fundamentally related to the effective complex dielectric permittivity via
$\varepsilon_{eff} = \tilde{n}_{eff}^2$.

Since the exact microgeometry of the co-deposited mixture is generally unknown, classical effective-medium
approximations may yield inaccurate predictions, especially for optical losses. To overcome this limitation,
we employ the rigorous bounding theory for the complex dielectric constant of two-component composites
developed by Bergman \cite{Bergman1980,Bergman1982} and Milton \cite{Milton1980,Milton1981,Milton2002}.

For a macroscopically homogeneous and isotropic three-dimensional composite with known volume fractions $(f_H, f_L=1-f_H)$ 
and known constituent complex permittivities $(\varepsilon_L, \varepsilon_H)$, the effective permittivity $\varepsilon_{eff}$ is strictly confined within a closed, 
lens-shaped region $\mathcal{L}$ in the complex $\varepsilon$ plane. 
The exact boundary of this region, formed by two circular arcs corresponding to the topological extremes where either 
the low-index or the high-index material acts as the host phase, can be explicitly parameterized by two real variables $u$ and $v$ 
(related to the geometric depolarization factors of the inclusions) as follows:
\begin{equation}
\label{eq:bergman_lens}
\left\{
\begin{aligned}
   \varepsilon_{eff}^{LH} &= f_L\varepsilon_L + f_H\varepsilon_H - \frac{f_L f_H (\varepsilon_L - \varepsilon_H)^2}{3 \left[ u\varepsilon_L + (1 - u)\varepsilon_H \right]}, \\
   &\quad\text{for } \tfrac{f_H}{3} \le u \le 1 - \tfrac{f_L}{3}, \\[8pt]
   \varepsilon_{eff}^{HL} &= f_L\varepsilon_L + f_H\varepsilon_H - \frac{f_L f_H (\varepsilon_L - \varepsilon_H)^2}{3 \left[ v\varepsilon_H + (1 - v)\varepsilon_L \right]}, \\
   &\quad\text{for } \tfrac{f_L}{3} \le v \le 1 - \tfrac{f_H}{3}.
\end{aligned}
\right.
\end{equation}
Evaluating the first branch $\varepsilon_{eff}^{LH}$ traces the primary circular arc, while evaluating the second branch $\varepsilon_{eff}^{HL}$ traces the complementary arc. 
The two vertices where these arcs intersect correspond precisely to the exact analytical solutions for the \textit{Doubly Coated Sphere}
 microgeometries, representing the theoretical extremes of the composite's optical response.
A central topological feature, proven analytically in Appendix~\ref{app:min_fH} and certified algebraically in Appendix~\ref{app:algebraic_certification},
 is that this entire admissible region for isotropic media is strictly 
 bounded away from the anisotropic Wiener upper bound, establishing a fundamental optical performance gap between the two architectures.
The non-intersection between the Bergman--Milton boundary arcs and the Wiener limit is algebraically certified using the quantifier-elimination
 procedure detailed in Appendix~\ref{app:algebraic_certification}. 

\begin{figure*}[b]
    \centering
    \includegraphics[width=0.88\textwidth]{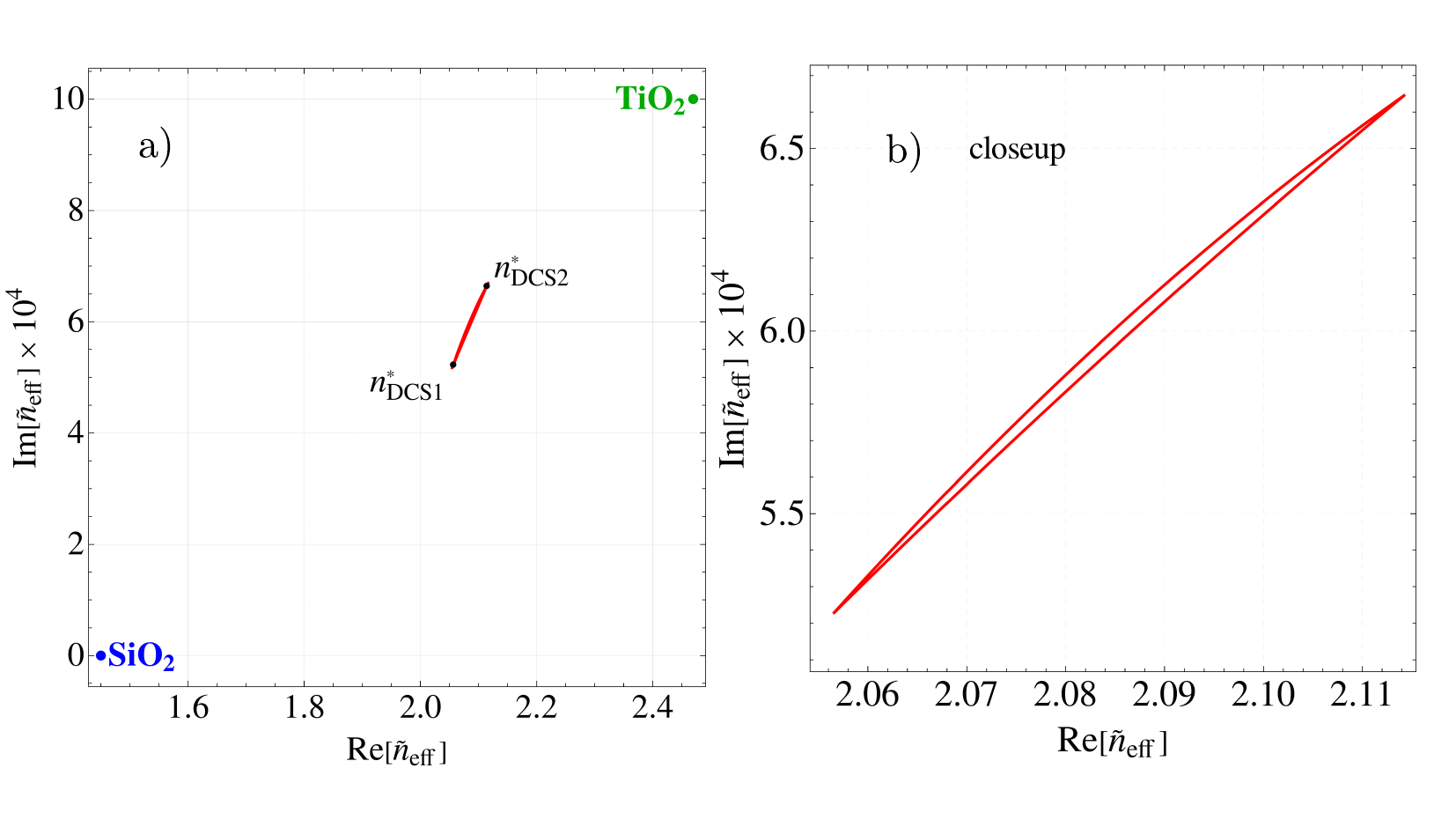}
    \caption{Rigorous bounds for the effective complex refractive index ($\tilde{n}_{eff}$) of the $\mathrm{SiO}_2/\mathrm{TiO}_2$ composite. 
    \textbf{(a)} Macroscopic view in the complex plane showing the constituent phases (Phase 1: blue label $\mathrm{SiO}_2$, Phase 2: green label $\mathrm{TiO}_2$) and the computed bounding region. 
    Note the application of a $10^4$ scaling factor on the imaginary axis to account for the intrinsically low optical losses. 
    \textbf{(b)} A high-resolution closeup of the mapped Bergman--Milton region. 
    The strictly permissible domain is confined within the red boundaries, which connect the two theoretical extremes corresponding to the doubly coated sphere geometries, 
    denoted as $n_{DCS1}^*$ and $n_{DCS2}^*$. 
    The extreme narrowness of this lens-shaped region demonstrates that the effective optical losses of the composite are tightly constrained, rendering them largely independent of the specific microscopic morphology.}
    \label{fig:refractive_bounds}
\end{figure*}

\begin{figure*}[b]
    \centering
    \includegraphics[width=0.75\textwidth]{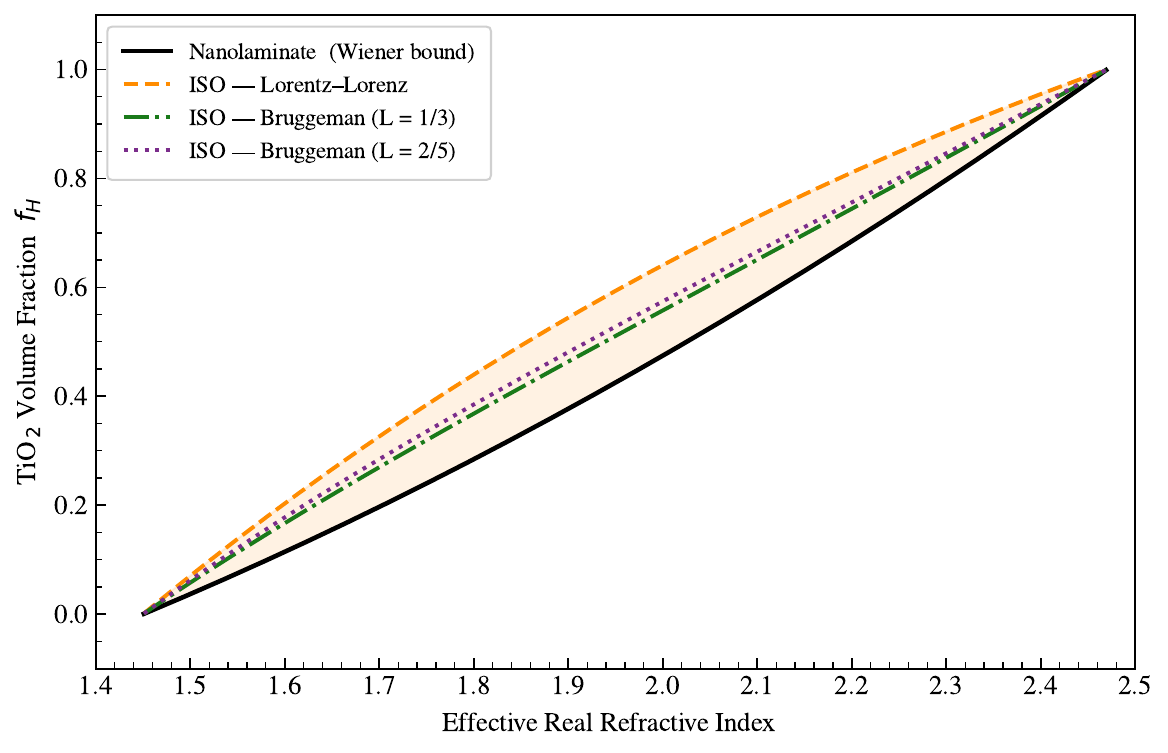}
    \caption{Exact high-index (TiO$_2$) volume fraction $f_H$ required to achieve a target real effective refractive index $n_{\rm eff}$.
     The curves are computed using the constituent properties from Table~\ref{tab:material_properties}. 
     The solid black line represents the perfectly ordered nanolaminate (Wiener upper bound), while the dashed and dotted lines correspond to 
     standard isotropic effective-medium closures. 
     The shaded area highlights the reduced volume fraction used by nanolaminated composite. 
     (see Appendix~\ref{app:min_fH} for the complete analytical proof of this systematic volume saving).}
     \label{fig:fH_comparison}
\end{figure*}

In our computational model, the exact boundary of the allowed region in the $\varepsilon$ plane is
calculated for a given volume fraction and subsequently mapped into the complex refractive-index plane
by taking the principal square root, $\tilde{n}=\sqrt{\varepsilon}$.

Figure~\ref{fig:refractive_bounds} shows that, for the present SiO$_2$/TiO$_2$ system, the optical
admissible region collapses into a very thin lens once the intrinsic extinction of the constituents is
taken into account. In practice, once the real part of the refractive index is fixed, the admissible
interval for the imaginary part becomes extremely narrow. This low-loss optical collapse is one of the
reasons why the final full scan can be simplified without losing the main physical content.

The structure of the allowed region for the dielectric constant of an isotropic mixture in the complex plane allows us to formulate 
the analytical (general) results derived in Appendix~\ref{app:min_fH} and the numerical evidence presented in Fig.~\ref{fig:fH_comparison}, 
which elucidate that the nanolaminate architecture strictly minimizes the high-index volume fraction compared to any isotropic closure,
 a feature that remains consistently valid within the narrow rigorous bounds of the admissible region.

\begin{figure*}[t]
    \centering
    \includegraphics[width=0.88\textwidth]{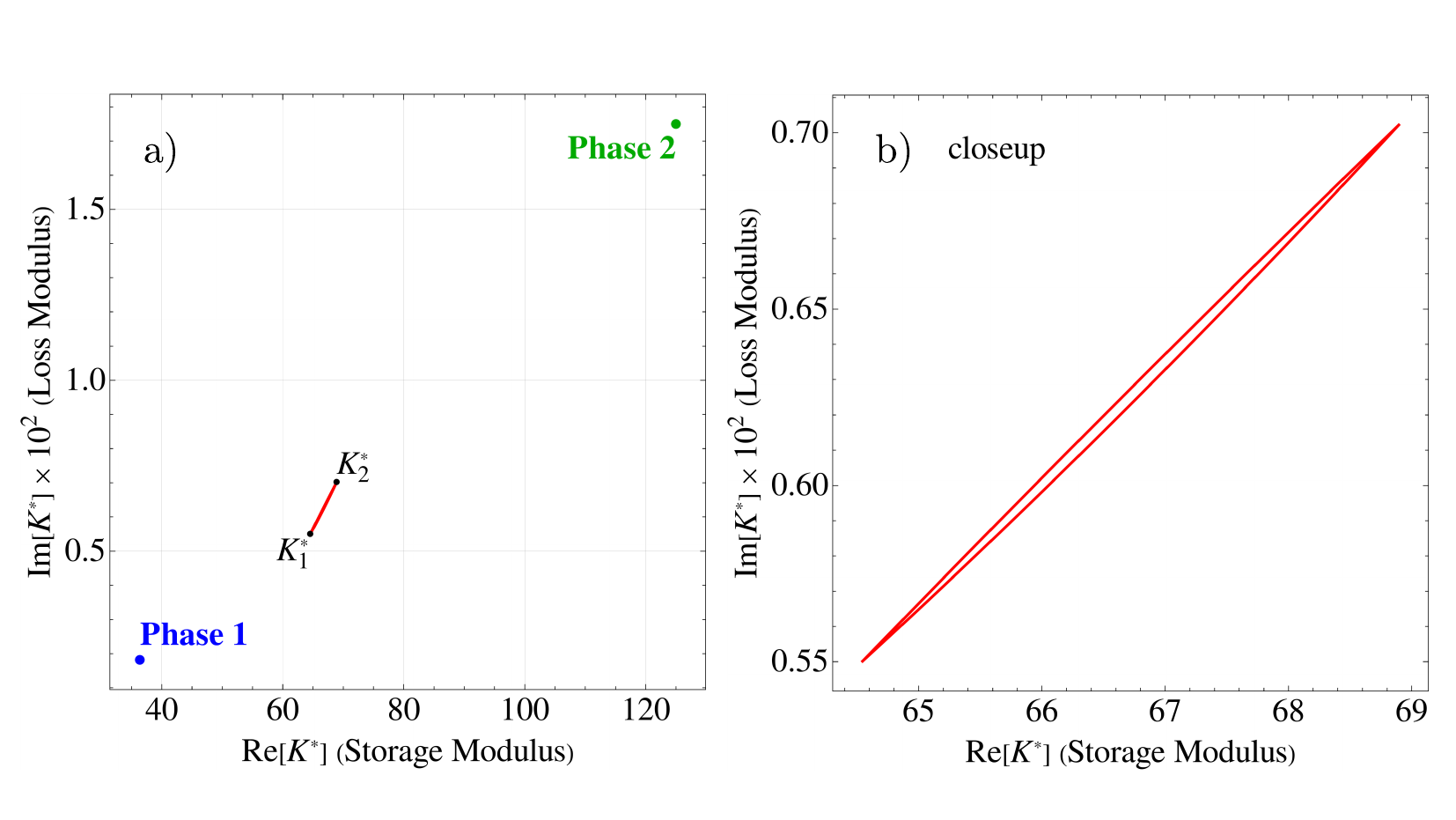}
    \caption{Rigorous bounds for the effective complex bulk modulus $K^*$ of the $\mathrm{SiO}_2/\mathrm{TiO}_2$ composite. 
    \textbf{(a)} Macroscopic view of the complex plane showing the properties of the pure constituent phases 
    (Phase 1: $\mathrm{SiO}_2$ blue dot; Phase 2: $\mathrm{TiO}_2$ green dot) and the bounding region (red) for a fixed volume fraction. The black dots, labeled $K_1^*$ and $K_2^*$, 
    represent the complex Hashin--Shtrikman--Walpole points. 
    \textbf{(b)} A high-resolution closeup of the bounding region. Due to the extremely low mechanical loss tangents of the constituent materials ($\phi \sim 10^{-4}$), 
    the bounding area degenerates into a highly elongated, eye-shaped lens. For practical modeling purposes, this microstructural uncertainty region is extremely 
    narrow and can be well approximated by a straight line segment connecting the two vertices.}
    \label{fig:bulk_bounds}
\end{figure*}

\subsection{Theoretical Bounds for the Complex Elastic Moduli}

The theoretical prediction of the mechanical response of a two-phase composite such as a
SiO$_2$/TiO$_2$ mixture requires the determination of its effective macroscopic properties.
While classical mixing rules provide simple arithmetic limits, they are inadequate for complex
viscoelastic parameters, where the moduli possess both a real (storage) and an imaginary (loss)
component. To accurately constrain the effective complex bulk ($K^*$) and shear ($\mu^*$)
moduli without prior knowledge of the exact microgeometry, we employ the rigorous variational
bounding method developed by Gibiansky, Milton, and Berryman
\cite{Gibiansky1993,Milton1997,Gibiansky1999}.

For the complex bulk modulus, the admissible region shown in Fig.~\ref{fig:bulk_bounds} is already
extremely narrow. The closeup of Fig.~\ref{fig:bulk_bounds} makes clear that the low constituent
loss angles listed in Table~\ref{tab:material_properties} compress the rigorous lens into a thin strip,
so that the exact admissible set is numerically very close to the straight segment joining the two HSW
endpoints \cite{Gibiansky1993,Hashin1963,Walpole1966}. In practice, this means that the uncertainty
associated with the isotropic co-sputtered morphology is already strongly reduced at the level of the
complex bulk response.

To avoid excessive notational clutter, in this subsection only we denote the low-index and high-index
phases by the indices $1$ and $2$, respectively.

For the complex shear modulus, the problem is mapped into an auxiliary complex plane using the
$Y$-transform \cite{Milton1997}. 
The effective shear modulus $\mu^*$ is related 
to a complex variable $y_\mu$ (the \mbox{$Y$-parameter}, which encapsulates the microstructural geometric constraints) through the fractional linear transformation
\begin{equation}
    \mu^* =
    \left[
    \sum_{i=1}^{2}\frac{f_i}{\mu_i+y_\mu}
    \right]^{-1}
    - y_\mu .
\end{equation}

\begin{figure}[t]
    \centering
    \includegraphics[width=\columnwidth]{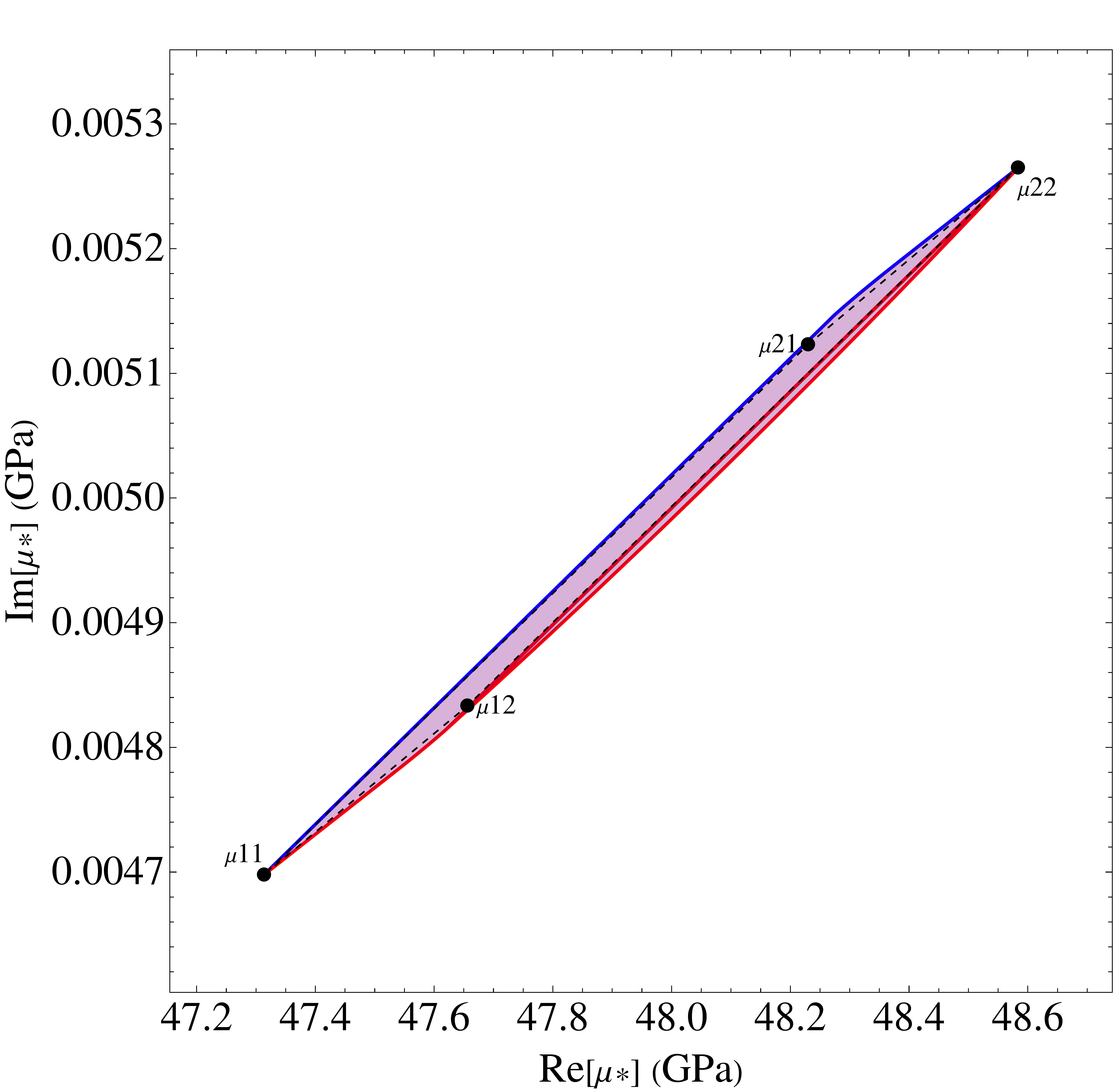}
    \caption{Exact 3D bounding region for the effective complex shear modulus ($\mu^*$) of the $\mathrm{SiO}_2/\mathrm{TiO}_2$ composite at a fixed volume fraction. The shaded purple area represents the strictly permissible domain for any realizable microgeometry. The solid blue and red curves denote the exact mapped inner and outer boundaries, respectively, computed via the continuous parametric algorithm in the complex $Y_\mu$ plane \cite{Milton1997}. The four black dots represent the complex Hashin--Shtrikman--Walpole points ($\mu_{11}$, $\mu_{12}$, $\mu_{21}$, $\mu_{22}$). The dashed black line illustrates the simple quadrilateral connecting these vertices; notably, the exact rigorous bounds exhibit a slight outward convexity, demonstrating that the true permissible region is marginally larger than the polygon defined solely by the extreme points.}
    \label{fig:shear_bounds}
\end{figure}

Figure~\ref{fig:shear_bounds} shows that the exact 3D admissible region is not exactly the quadrilateral defined by the four complex HSW points
$\mu_{11}$, $\mu_{12}$, $\mu_{21}$, and $\mu_{22}$.

In this notation, the double indices represent the specific combinations of the constituent bulk and shear moduli corresponding 
to the theoretical doubly coated sphere microgeometries. 
The exact boundary exhibits a small outward convexity, as expected from the full Milton--Berryman construction
in the auxiliary $Y_\mu$ plane \cite{Milton1997, Milton2002}. However, for the present SiO$_2$/TiO$_2$ system,
the difference between the exact region and the polygon defined by the four extremal points is numerically
very small because the loss angles are of order $10^{-4}$ (Table~\ref{tab:material_properties}).
Therefore, while the four-point construction is not mathematically identical to the exact rigorous
boundary, it provides an excellent low-loss approximation for the present materials system.

The key consequence of Figs.~\ref{fig:refractive_bounds}--\ref{fig:bulk_bounds} is therefore the following.
Strictly speaking, the final full scan is not presented as a fully rigorous coupled optical--mechanical
extremal construction for a single realizable microstructure. However, in the specific low-loss
SiO$_2$/TiO$_2$ case, the exact admissible domains collapse numerically into very narrow regions that are
well approximated by short boundary segments in the optical plane and by the quadrilateral generated by
the four mechanical HSW points. This is the reason why the simplified full phase-space scan remains
physically meaningful and quantitatively informative for the present materials system.

\section{Computational Framework for the Full Phase-Space Scan}
\label{sec:framework}

Having established the methods to compute the opto-mechanical properties for both architectures, this section outlines the numerical protocol used to generate the final dissipation comparison. 
The numerical algorithm, implemented in the Julia programming language \cite{Julia2017},
 evaluates, for each prescribed effective real refractive index $n_{eff}$, the ratio between the dissipated power of the nanolaminate architecture 
($W_{NL}$) and that of an isotropic mixed coating ($W_{ISO}$) under strict optical parity.

The scan is performed over the interval $n_L \le n_{eff} \le n_H$. For each iteration step (i.e., for each target $n_{eff}$), the algorithm executes the following sequence:
\begin{enumerate}
    \item \textbf{Optical Thickness Fixing:} The physical thickness of both coatings is constrained by the quarter-wave condition at the design wavelength $\lambda_0$:
    \begin{equation}
        d_{coat} = \frac{\lambda_0}{4 n_{eff}} .
    \end{equation}
    
    \item \textbf{Nanolaminate Evaluation:} The optical volume fraction $f_H^{NL}$ is extracted using the stratified Wiener relation. This fraction is fed into the Backus averaging formulas in the complex plane to yield $A_{NL}, B_{NL},$ and $D_{NL}$. Because the nanolaminate is perfectly symmetric in-plane, the cross-coupling term vanishes ($R_{mix}^{NL} = 0$). The dissipated power $W_{NL}$ is then computed via Eq.~(\ref{eq:Wdiss}).
    
    \item \textbf{Isotropic Mixture Evaluation:} The volume fraction $f_H^{ISO}$ is computed using three distinct optical closures: L--L, spherical Bruggeman, and generalized Bruggeman ($L=0.40$). 
    For each fraction, the non-linear Self-Consistent Scheme is numerically solved in the complex plane to extract $K_{eff}$ and $\mu_{eff}$. These are mapped to $A_{ISO}, B_{ISO}, D_{ISO},$ 
    and the coupling term $R_{mix}^{ISO}$ (which is strictly non-zero for isotropic mixtures), allowing the computation of $W_{ISO}$ via Eq.~(\ref{eq:Wdiss}).
    
    \item \textbf{Admissible Envelope Construction:} Independent of the specific EMT models, the rigorous Bergman--Milton optical bounds are 
    inverted to obtain the mathematically permissible interval $[f_{H,\min}^{opt}, f_{H,\max}^{opt}]$ for the given real $n_{eff}$. 
The code scans this interval, evaluating the complex mechanical vertices dictated by the HSW bounds. The minimum and maximum dissipation values obtained define the absolute rigorous envelope for the isotropic mixed coating.
\end{enumerate}

For each $n_{eff}$, the code outputs the dissipation ratio $W_{NL}/W_{ISO}$ for the realistic EMT closures, alongside the upper and lower bounds of the rigorous envelope. The results of this full phase-space scan are presented in Fig.~\ref{fig:full_phase_space_scan} and discussed in the next section.

\begin{figure*}[t]
    \centering
    \includegraphics[width=0.82\textwidth]{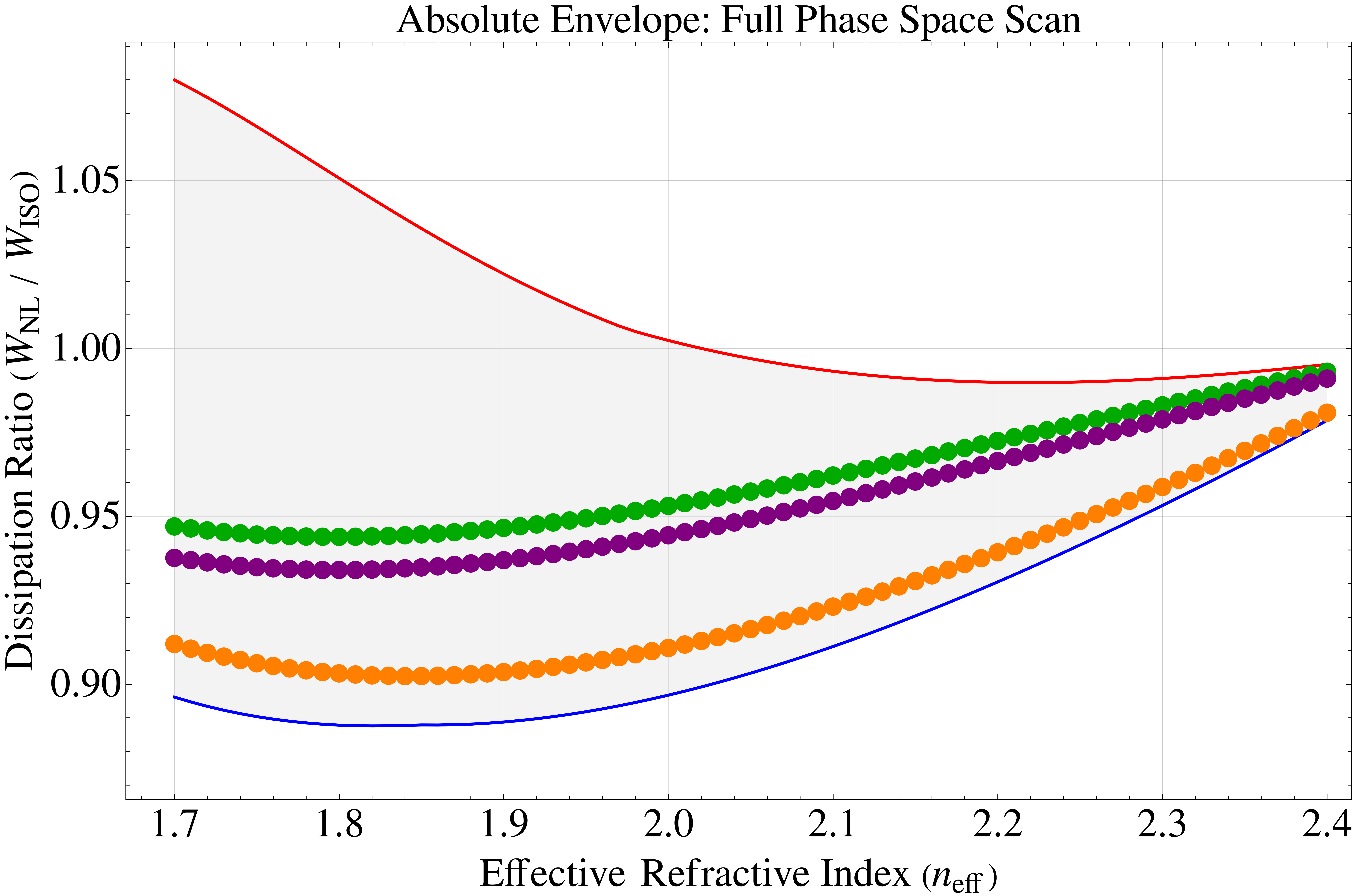}
    \caption{Full phase-space scan of the dissipation ratio $W_{NL}/W_{ISO}$ as a function of the target real effective refractive index $n_{eff}$. 
    The red and blue curves delimit the optically admissible mechanical envelope obtained by scanning the isotropic mixed coating over the range of phase fractions allowed by the Bergman-Milton optical  
    bounds and over the corresponding HS/HSW-like mechanical vertices.
    The orange points correspond to the realistic isotropic mixture obtained by combining the L--L optical closure with the complex self-consistent mechanical solution. 
    The green points show the analogous result obtained with the spherical Bruggeman optical closure, while the purple points correspond to the generalized Bruggeman closure with shape 
    parameter $L=0.40$. Values below unity indicate that the nanolaminate dissipates less power than the corresponding isotropic mixed coating. 
    In the realistic closures considered here, the nanolaminate remains consistently less dissipative over the full interval of $n_{eff}$ explored, while the broader 
    red--blue band shows that this advantage is not absolute over the entire admissible phase space.}
    \label{fig:full_phase_space_scan}
\end{figure*}

\section{Results and Discussion}
\label{sec:results}

The purpose of this section is to translate the effective-medium and bounding results of the previous sections into a direct dissipation comparison between the two coating architectures. We first analyze the realistic isotropic-mixture models introduced in Section~\ref{sec:isotropic_mixtures} and compare them with the nanolaminate baseline through the full phase-space scan of Fig.~\ref{fig:full_phase_space_scan}. We then examine the wider admissible envelope to assess how strongly the conclusion depends on the unknown isotropic microstructure, ultimately discussing the physical mechanisms driving the observed advantage of the ordered layered topology.

\subsection{Thermal-Noise Comparison at Parity of Optical Design}

The final result of the analysis is summarized in Fig.~\ref{fig:full_phase_space_scan}, which reports the dissipation ratio $W_{NL}/W_{ISO}$ as a function of the target real effective 
refractive index $n_{eff}$. The red and blue curves delimit the optically admissible mechanical envelope, while the orange, green, and purple points correspond 
to the three realistic isotropic closures introduced in Section~\ref{subsec:Optical}.

The primary result is that all realistic closures lie strictly below unity over the full interval of $n_{eff}$ explored here. This indicates that, across all realistic isotropic-mixture models considered, the nanolaminate is consistently less dissipative than the corresponding co-sputtered isotropic coating. Among these closures, the L--L model yields the lowest values of $W_{NL}/W_{ISO}$, suggesting that the most finely dispersed isotropic morphology considered here is also the most dissipative relative to the ordered layered architecture.

The spherical Bruggeman and generalized Bruggeman closures are less unfavorable to the isotropic mixture, yet they still preserve a clear dissipative advantage for the nanolaminate. 
In the intermediate refractive-index region ($n_{eff} \sim 1.8$--$2.0$), the reduction in dissipation is on the order of a few percent for Bruggeman-type models, 
approaching the $10\%$ mark for closures more favorable to the nanolaminate architecture, such as the L--L model.

\subsection{Interpretation of the Admissible Envelope}

Figure~\ref{fig:full_phase_space_scan} also illustrates that this advantage is not mathematically absolute over the entire theoretically admissible phase space. 
The upper red branch remains above unity at low and intermediate $n_{eff}$, implying that highly specific, idealized or extremal isotropic microstructures 
could theoretically compete with or exceed the nanolaminate in that corner of the admissible domain.

However, the crucial point is that the rigorous Bergman-Milton and Gibiansky-Milton-Berryman admissible regions in the complex plane collapse into extremely narrow lenses due to the intrinsically low mechanical and optical losses of the constituent materials ($\phi \sim 10^{-4}$, $\kappa \sim 10^{-8}-10^{-3}$). This geometric degeneration drastically restricts the microstructural uncertainty, proving that our phase-space scan remains quantitatively robust and tightly bounds the exact admissible set. Consequently, all physically realistic isotropic closures lie well inside this narrow envelope and remain strictly below unity across the full scan. The superiority of the nanolaminate is therefore not an isolated artifact of a single mixing rule, but a stable, structurally driven feature.

\subsection{The Topological Trade-Off}

The physical origin of this result resolves a highly non-trivial topological trade-off. A priori, it is not obvious that a nanolaminate should outperform an isotropic mixture. From a purely mechanical standpoint, the nanolaminate is actually disadvantaged: its parallel layered structure forces the in-plane shear deformations—excited by the Gaussian laser pressure—to operate at the Voigt upper bound, which maximizes viscoelastic energy dissipation. An isotropic mixture, conversely, distributes shear strain more efficiently. 

However, the nanolaminate possesses a critical optical advantage: it operates at the Wiener upper bound for the in-plane dielectric permittivity. This extreme optical efficiency 
allows the nanolaminate to reach the target real refractive index using a significantly lower volume fraction of the highly dissipative high-index oxide (TiO$_2$)
 compared to an isotropic mixture (as rigorously proven and algebraically certified in Appendices~\ref{app:min_fH} and \ref{app:algebraic_certification}).
Our calculations show that, for the material parameters considered here, this drastic reduction in the volume of lossy material overwhelmingly overcompensates 
 for the mechanical anisotropy penalty, resulting in a net reduction of internal friction.

This theoretical demonstration of a purely topological advantage perfectly complements the known experimental and technological benefits of NLOF. 
As extensively investigated within the gravitational-wave community \cite{Pinto_G1600665, Durante2023, Pinto_G2400526}, 
the geometric confinement inherent in sub-wavelength NLOF frustrates the crystallization of high-index oxides. 
This allows the coatings to withstand significantly higher annealing temperatures without developing scattering defects, which in turn leads to a further,
 substantial reduction in mechanical loss. Furthermore, nanolayering flattens and shifts the low-temperature mechanical loss peaks
 \cite{Pinto_G2001499}, addressing one of the most critical bottlenecks for cryogenic interferometry.

As characterized in Ref.~\cite{Magnozzi2018}, nanoscale confinement
produces a slight reduction and a weak thickness dependence of the
refractive index of the constituent oxides owing to interface and
density effects; the experimentally derived indices adopted here
therefore incorporate these nanoscale optical effects.

The numerical baseline reported in Table~\ref{tab:material_properties} combines established benchmark
parameters for the low-index silica constituent \cite{Flaminio2010,Bodya}, nominal room-temperature mechanical
parameters for amorphous titania, and optical parameters obtained from a refined analysis of the
spectroscopic-ellipsometry data acquired on our nanolayer samples, supported by their structural characterization
\cite{Durante2024,Magnozzi2018}. The quantitative dissipation ratios reported in this work are therefore conditional on this particular
baseline parameter set. The optical component of our argument is independent of the mechanical
loss parameters. Because the nanolaminate operates at the Wiener optical upper bound, it reaches the prescribed effective refractive
index using a strictly smaller volume fraction of the high-index phase, $f_H^{\rm NL}<f_H^{\rm ISO}$, as rigorously proved in Appendix~\ref{app:min_fH}.
For the benchmark dissipative contrast adopted here, $\phi_H>\phi_L$, the reduction of the high-index volume fraction also reduces the amount of the more dissipative constituent.
This provides a physically motivated heuristic explanation for the lower dissipation obtained for the realistic isotropic closures considered in this work, but it does not constitute a general
robustness proof. Within the low-loss approximation, a common rescaling of both constituent loss angles is expected to modify the dissipations of the two architectures by approximately the same factor and therefore to have a much smaller effect on their ratio. By contrast, independent variations of the constituent loss angles, Young's moduli, or Poisson's ratios may modify $W_{\rm NL}/W_{\rm ISO}$. These observations provide only a qualitative sensitivity argument and should not be interpreted as a substitute for a systematic parametric analysis.

\section{Conclusions and Outlook}
\label{sec:conclusions}

In this work, we introduced a fully coupled opto-mechanical framework to evaluate the thermal noise dissipation of 
perfectly ordered periodic nanolaminates compared to disordered isotropic co-sputtered mixtures. By enforcing a strict quarter-wave 
optical thickness constraint and targeting the exact same effective real refractive index, we ensured an unbiased and physically consistent 
comparison between the two architectures for the baseline $\mathrm{SiO_2/TiO_2}$ system.

By evaluating the exact Wiener--Backus limits for the layered medium against realistic Effective Medium Theories (Lorentz--Lorenz and Bruggeman closures coupled with a complex Self-Consistent Scheme) and mapping their absolute variational bounds, we demonstrated that the nanolaminate systematically exhibits lower thermal noise dissipation ($W_{NL}/W_{ISO} < 1$). 

The physical mechanism driving this performance is a favorable topological trade-off. 
The nanolaminate operates at the Wiener upper bound for the in-plane dielectric permittivity, maximizing optical efficiency. 
As mathematically proven and algebraically certified in Appendices~\ref{app:min_fH} and \ref{app:algebraic_certification}, this efficiency 
allows the ordered structure to reach the target refractive index using a strictly lower volume fraction of the highly dissipative high-index phase. 
We showed that this reduction in the volume of lossy material more than compensates for the inherent mechanical penalty associated with in-plane shear deformations.

Nonetheless, we must acknowledge certain physical limitations of the present idealized theoretical framework. First, our thermodynamic analysis assumes that the constituent amorphous phases ($\mathrm{SiO_2}$ and $\mathrm{TiO_2}$) preserve their independent structural, optical, and dissipative identities when co-deposited or layered. In real co-sputtered mixtures, nanoscale structural rearrangements, localized coordination changes, or chemical interactions between the different oxides might alter their intrinsic loss angles, potentially causing the composite to depart from the predicted thermodynamic bounds. Second, our nanolaminate model relies on perfectly sharp, step-like interfaces between the alternating sub-wavelength layers. 
In practice, a sub-nanometer interdiffusion (intermixing) zone at the boundaries inevitably occurs during deposition and subsequent thermal annealing. 
Although experimental characterizations show that such interdiffused regions are extremely narrow \cite{Pierro_ScienceDirect2023}, their presence could introduce 
local gradient-index optical effects and shear stress relaxation states that are neglected in our exact Wiener--Backus treatment. 
Nevertheless, these idealizations do not undermine the physical validity of our approach; rather, they establish a rigorous theoretical benchmark---an ultimate performance limit to strive for in the practical development of composite optical coatings.

A further limitation of the present analysis is the use of a single loss angle for each constituent phase, implemented through purely real Poisson's ratios. Introducing independent bulk and
shear loss angles, $\phi_{K,i}\neq\phi_{\mu,i}$, would generally result in complex constituent Poisson's ratios and would require the viscoelastic bounds and the corresponding Brownian dissipations to be
recomputed. In the zero-loss limit, the admissible complex-modulus regions reduce to the classical Hashin--Shtrikman--Walpole intervals on the real axis. Independent imaginary components generally open
these intervals into two-dimensional regions bounded by circular arcs or, for the general shear-modulus problem, by intersections of circular bounding regions \cite{Gibiansky1993,Milton1997}. Since the loss angles
relevant to the present optical coatings are small, these admissible regions are expected to remain narrow in absolute terms, although their width, orientation, and detailed geometry may change.

We conjecture that, for the benchmark $\mathrm{SiO_2/TiO_2}$ system and the realistic isotropic closures considered here, the inequality $W_{\rm NL}/W_{\rm ISO}<1$ remains valid under sufficiently small departures from the single-loss-angle approximation, provided that all constituent bulk and shear loss angles remain in the low-loss regime. A quantitative assessment of this conjecture and, more generally, of the robustness of the present conclusions requires a systematic parametric investigation of the constituent optical and mechanical properties, supported by reliable constituent-specific experimental data, particularly independent measurements of the bulk and shear loss angles. This combined theoretical, computational, and experimental extension represents an important direction for future work.

Under this perspective, these theoretical findings indicate that synthesizing optical materials as highly ordered 1D superlattices provides a fundamental,
structurally driven reduction in internal friction. Combined with their experimentally proven ability to frustrate crystallization and mitigate low-temperature 
mechanical loss peaks, nanolaminates emerge as a highly promising architectural pathway, setting a clear design target for the development of ultra-low noise 
mirrors in next-generation (3G) gravitational-wave observatories, such as the Einstein Telescope \cite{ETsite} and Cosmic Explorer \cite{CEsite}.

\section*{Acknowledgements}
The authors are deeply grateful to I. M. Pinto, who provided the original inspiration for this research,  
for his continuous encouragement and invaluable technical insights throughout this work. 
We also extend our sincere appreciation to G. Cagnoli and E. Cesarini for their constructive feedback. 
Furthermore, we acknowledge the Virgo CRD and LIGO Optics Working Groups for the stimulating scientific exchanges that contributed to this research. 
This work was partially supported by INFN Sezione di Napoli, Gruppo Collegato di Salerno.

\appendix
\section{Minimum high-index fraction}
\label{app:min_fH}

This appendix proves a statement stronger than the usual intermediate-index property of passive mixtures. 
Specifically, in the passive low-loss complex regime relevant to the optical oxides considered here, the Wiener-parallel nanolaminate 
reaches any strictly interior prescribed real refractive index with a strictly smaller high-index volume fraction than 
any admissible macroscopically isotropic morphology made of the same two constituents.

We denote the complex permittivities of the high- and low-index phases by
\begin{equation}
    \varepsilon_H=\varepsilon_H' + i\varepsilon_H'',
    \qquad
    \varepsilon_L=\varepsilon_L' + i\varepsilon_L'',
    \label{eq:app_perm_def}
\end{equation}
with
\begin{equation}
    \varepsilon_H' > \varepsilon_L' >0,
    \qquad
    \varepsilon_H'',\varepsilon_L''\ge 0.
    \label{eq:app_passive_constituents}
\end{equation}
We also set
\begin{equation}
    f_L=1-f_H,
    \,
    \Delta:=\varepsilon_H-\varepsilon_L
    =\Delta' + i\Delta'',
    \,
    \Delta'=\varepsilon_H'-\varepsilon_L'>0.
    \label{eq:app_delta_def}
\end{equation}
No sign assumption is imposed on \(\Delta''\). In particular, the proof below
remains valid even if the high-index phase is optically less lossy than the
low-index phase.

We measure the optical-loss strength by
\begin{equation}
    \eta_{\max}:=
    \max\left\{
    \frac{\varepsilon_L''}{\varepsilon_L'},
    \frac{\varepsilon_H''}{\varepsilon_H'},
    \frac{|\varepsilon_H''-\varepsilon_L''|}
         {\varepsilon_H'-\varepsilon_L'}
    \right\}.
    \label{eq:app_eta_max_def}
\end{equation}
The sign-independent low-loss hypothesis used in this appendix is
\begin{equation}
    \eta_{\max}< \frac14.
    \label{eq:app_loss_hyp}
\end{equation}
The factor \(1/4\) is a sign-independent safety margin controlling possible
upward excursions of the Bergman--Milton lens when \(\Delta''\) is allowed to
have either sign. For the SiO$_2$/TiO$_2$ data used in this paper, this
hypothesis is satisfied by a very large margin.

The nanolaminate in-plane permittivity is the Wiener upper value
\begin{equation}
    \varepsilon_k(f_H)=f_L\varepsilon_L+f_H\varepsilon_H
    =\varepsilon_L+f_H\Delta.
    \label{eq:app_wiener_value}
\end{equation}
For the same volume fraction, the effective permittivity of any
macroscopically isotropic two-phase composite lies in the Bergman--Milton
admissible lens \(\mathcal L(f_H)\). Its boundary consists of the two arcs
\begin{equation}
\label{eq:app_BM_boundary_arcs}
\begin{aligned}
    \varepsilon_{b}^{LH}(u)
    &=
    \varepsilon_k(f_H)
    -
    \frac{f_Lf_H\Delta^2}
         {3\left[u\varepsilon_L+(1-u)\varepsilon_H\right]}, \\
    &\quad \text{for } \tfrac{f_H}{3}\le u\le 1-\tfrac{f_L}{3}, \\[6pt]
    \varepsilon_{b}^{HL}(v)
    &=
    \varepsilon_k(f_H)
    -
    \frac{f_Lf_H\Delta^2}
         {3\left[v\varepsilon_H+(1-v)\varepsilon_L\right]}, \\
    &\quad \text{for } \tfrac{f_L}{3}\le v\le 1-\tfrac{f_H}{3}.
\end{aligned}
\end{equation}
These are the same boundary arcs used in the main text, reported here for ease of reference.

\begin{lemma}[Polar identity for the real refractive index]
\label{lem:app_polar_identity}
Let \(z=x+iy\), with \(x>0\) and \(y\ge0\), and let the square root be taken on
the principal branch. Then
\begin{equation}
    2\left(\operatorname{Re}\sqrt z\right)^2=|z|+x.
    \label{eq:app_polar_identity}
\end{equation}
\end{lemma}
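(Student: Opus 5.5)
Write $z$ in polar form, $z=re^{i\theta}$, where $r=|z|=\sqrt{x^2+y^2}>0$. Because $x>0$, the point $z$ lies in the open right half-plane, so its principal argument satisfies $\theta\in(-\pi/2,\pi/2)$; since $y\ge0$, we even have $\theta\in[0,\pi/2)$. The principal square root is then $\sqrt z=\sqrt r\,e^{i\theta/2}$ with $\theta/2\in[0,\pi/4)$. It follows that
\begin{equation*}
\operatorname{Re}\sqrt z=\sqrt r\cos(\theta/2)>0 .
\end{equation*}

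Squaring and using the half-angle identity $2\cos^2(\theta/2)=1+\cos\theta$ gives
\begin{equation*}
2\left(\operatorname{Re}\sqrt z\right)^2=r\left(1+\cos\theta\right)=r+r\cos\theta=|z|+x ,
\end{equation*}
because $r\cos\theta=\operatorname{Re} z=x$. This is exactly Eq.~(\ref{eq:app_polar_identity}).

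As an algebraic cross-check that avoids trigonometry, set $\sqrt z=a+ib$. Squaring yields $a^2-b^2=x$, and taking moduli yields $a^2+b^2=|z|$. Adding these two equations gives $2a^2=|z|+x$, which is the same identity. The principal branch matters only in fixing which of the two roots is meant, and the squared quantity $a^2$ is the same for both roots $\pm\sqrt z$.

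There is no real obstacle. The only point needing care is that the hypothesis $x>0$ keeps $z$ away from the branch cut on the negative real axis, so the principal root is well defined and continuous there. The hypothesis $y\ge0$ is not used in the identity itself. It simply matches the passive-medium setting, where the effective permittivity has nonnegative imaginary part and the resulting refractive index has nonnegative extinction coefficient.

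The identity shows that $n=\operatorname{Re}\sqrt{\varepsilon}$ is a strictly increasing function of $|\varepsilon|+\operatorname{Re}\varepsilon$. This is presumably how it will be used later, to compare refractive indices across the Bergman--Milton lens.
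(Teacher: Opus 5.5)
Your proof is correct and follows essentially the same route as the paper: polar form with $\theta\in[0,\pi/2)$, principal root $\sqrt{|z|}\,e^{i\theta/2}$, and the half-angle identity $2\cos^2(\theta/2)=1+\cos\theta$. Your algebraic cross-check ($a^2-b^2=x$, $a^2+b^2=|z|$) is also valid, and you are right that $y\ge 0$ is not needed for the identity itself.
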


\begin{proof}
Write \(z=|z|e^{i\theta}\), with \(\theta\in[0,\pi/2)\). Then
\[
    \sqrt z=\sqrt{|z|}\,e^{i\theta/2},
    \qquad
    \operatorname{Re}\sqrt z=\sqrt{|z|}\cos\frac{\theta}{2}.
\]
Therefore
\[
    2\left(\operatorname{Re}\sqrt z\right)^2
    =2|z|\cos^2\frac{\theta}{2}
    =|z|(1+\cos\theta)=|z|+x.
\]
\end{proof}

\begin{lemma}[Strict monotonicity of the nanolaminate index]
\label{lem:app_NL_monotonicity}
Under the low-loss hypothesis \eqref{eq:app_loss_hyp}, the function
\begin{equation}
    F_{\rm NL}(f_H):=
    \operatorname{Re}\sqrt{\varepsilon_k(f_H)}
    \label{eq:app_FNL_def}
\end{equation}
is strictly increasing for \(0<f_H<1\).
\end{lemma}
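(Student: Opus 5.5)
We will differentiate $F_{\rm NL}$ via the polar identity of Lemma~\ref{lem:app_polar_identity}. Write $\varepsilon_k(f_H)=x(f_H)+iy(f_H)$ with $x=\varepsilon_L'+f_H\Delta'$ and $y=\varepsilon_L''+f_H\Delta''$. Since $\varepsilon_k$ is a convex combination of $\varepsilon_L$ and $\varepsilon_H$, we have $x>0$ and $y\ge0$ on $[0,1]$, so the lemma applies and gives
\[
2F_{\rm NL}(f_H)^2=|\varepsilon_k|+x=\sqrt{x^2+y^2}+x .
\]
Moreover $F_{\rm NL}>0$, because $\operatorname{Re}\sqrt z>0$ for $x>0$. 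Hence $F_{\rm NL}$ is strictly increasing exactly when $G(f_H):=\sqrt{x^2+y^2}+x$ is, and it suffices to show $G'>0$ on $(0,1)$.

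Differentiating, with $x'=\Delta'>0$ and $y'=\Delta''$, gives
\[
G'=\frac{x\Delta'+y\Delta''}{|\varepsilon_k|}+\Delta' .
\]
If $\Delta''\ge0$, every term is nonnegative and $\Delta'>0$, so we are done immediately. The only delicate case is $\Delta''<0$, and this is where the low-loss hypothesis \eqref{eq:app_loss_hyp} enters.

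For $\Delta''<0$ we bound the negative term using $|\Delta''|\le\eta_{\max}\Delta'$ from \eqref{eq:app_eta_max_def}. This gives $y\Delta''\ge-\eta_{\max}\Delta'\,y$ and therefore
\[
G'\ge\Delta'\left(1+\frac{x-\eta_{\max}y}{|\varepsilon_k|}\right)\ge\Delta'\left(1-\frac{\eta_{\max}y}{|\varepsilon_k|}\right).
\]
The second inequality drops $x/|\varepsilon_k|>0$. Since $y\le|\varepsilon_k|$, we get $G'\ge\Delta'(1-\eta_{\max})>\tfrac34\Delta'>0$.

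Alternatively, one can bound $y\le\eta_{\max}x$ as a convex combination of the ratios $\varepsilon_i''/\varepsilon_i'$. This makes $x+y\Delta''/\Delta'\ge x(1-\eta_{\max}^2)>0$ and gives the same conclusion. This step is the main obstacle only in the sense of controlling the sign of $\Delta''$, and it clearly requires far less than $\eta_{\max}<1/4$.

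Finally, $G$ is $C^1$ on $[0,1]$ because $|\varepsilon_k|>0$ there. Strict positivity of $G'$ on the open interval therefore yields strict monotonicity of $G$, and hence of $F_{\rm NL}=\sqrt{G/2}$.
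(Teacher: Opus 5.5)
Your proof is correct and follows essentially the paper's route: the polar identity $2F_{\rm NL}^2=|\varepsilon_k|+x$, differentiation giving $\Delta'(1+\cos\theta_k+\eta\sin\theta_k)$ in the paper's polar notation, and control of the possibly negative $\Delta''$ term via $|\eta|\le\eta_{\max}<1$. The only difference is which elementary inequality closes the argument: you use $\sin\theta_k\le 1$ (or $y\le\eta_{\max}x$), while the paper uses $1+\cos\theta_k>\sin\theta_k$. As you note, both versions need only $\eta_{\max}<1$ rather than $1/4$.
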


\begin{proof}
Let
\[
    \varepsilon_k(f_H)=x_k+iy_k=|\varepsilon_k|e^{i\theta_k},
    \qquad
    \theta_k\in[0,\pi/2).
\]
By Lemma~\ref{lem:app_polar_identity},
\[
    2F_{\rm NL}^2=|\varepsilon_k|+x_k.
\]
Since \(d\varepsilon_k/df_H=\Delta\), differentiation gives
\begin{equation}
\begin{aligned}
    4F_{\rm NL}\frac{dF_{\rm NL}}{df_H}
    &=
    \frac{d|\varepsilon_k|}{df_H} + \frac{dx_k}{df_H} \\
    &= \Delta'\cos\theta_k+ \Delta''\sin\theta_k+ \Delta' \\
    &= \Delta'\left(1+ \cos\theta_k+\eta\sin\theta_k\right),
\end{aligned}
\label{eq:app_FNL_derivative}
\end{equation}
where
\begin{equation}
    \eta:=\frac{\Delta''}{\Delta'}.
    \label{eq:app_eta_def}
\end{equation}
By \eqref{eq:app_eta_max_def}, \(|\eta|\le \eta_{\max}<1\). Moreover,
\[
    1+\cos\theta_k>\sin\theta_k \qquad \text{for }\theta_k\in[0,\pi/2).
\]
Hence
\[
    1+\cos\theta_k+\eta\sin\theta_k \ge 1+\cos\theta_k-|\eta|\sin\theta_k >0.
\]
Since \(\Delta'>0\) and \(F_{\rm NL}>0\), Eq.~\eqref{eq:app_FNL_derivative}
gives 
\[
dF_{\rm NL}/df_H>0
\].
\end{proof}

\begin{lemma}[Dominant horizontal separation of the boundary]
\label{lem:app_boundary_gap}
Let \(0<f_H<1\). Denote by \(\varepsilon_k=f_L\varepsilon_L+f_H\varepsilon_H\)
the Wiener upper permittivity, and let
\(\varepsilon_b\in\partial\mathcal L(f_H)\) be any point on the
Bergman--Milton boundary. If
\[
    \varepsilon_k=x_k+iy_k,
    \qquad
    \varepsilon_b=x_b+iy_b,
\]
and
\[
    \delta x:=x_k-x_b,
    \qquad
    \delta y:=y_b-y_k,
\]
then, under the low-loss hypothesis \eqref{eq:app_loss_hyp},
\begin{equation}
    \delta x>0,
    \qquad
    |\delta y|
    \le
    \frac{3\eta_{\max}}{1-3\eta_{\max}^2}\,\delta x
    <
    4\eta_{\max} \delta x.
    \label{eq:app_boundary_gap}
\end{equation}
Equivalently, the boundary of the isotropic admissible lens remains separated
from the Wiener point by a strictly positive real gap, while its vertical
excursion is only a low-loss correction of order \(\eta_{\max}\).
\end{lemma}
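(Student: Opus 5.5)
The plan is to work directly with the explicit form of the boundary arcs \eqref{eq:app_BM_boundary_arcs} and reduce the statement to a bound on a single complex argument. For either arc, the difference between the Wiener point and a boundary point is
\[
    \varepsilon_k-\varepsilon_b=\delta x-i\,\delta y=\frac{f_Lf_H}{3}\,\frac{\Delta^2}{w},
\]
where \(w\) is one of the denominators \(u\varepsilon_L+(1-u)\varepsilon_H\) or \(v\varepsilon_H+(1-v)\varepsilon_L\). The admissible parameter ranges lie inside \([0,1]\), so in both cases \(w\) is a convex combination of \(\varepsilon_L\) and \(\varepsilon_H\). Since \(f_Lf_H/3>0\), the sign of \(\delta x\) and the ratio \(|\delta y|/\delta x\) depend only on
\[
    \psi:=\arg(\Delta^2)-\arg(w).
\]
Specifically, \(\delta x=|\varepsilon_k-\varepsilon_b|\cos\psi\) and \(|\delta y|=|\varepsilon_k-\varepsilon_b|\,|\sin\psi|\).

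\textbf{Step 1: bound the two arguments.} Write \(\Delta=\Delta'(1+i\eta)\) with \(\eta\) as in \eqref{eq:app_eta_def}, so that \(|\eta|\le\eta_{\max}\) by \eqref{eq:app_eta_max_def}. Then \(\arg\Delta^2=2\arctan\eta\in[-2\arctan\eta_{\max},\,2\arctan\eta_{\max}]\). For the denominator, \(w'>0\) and \(w''\ge0\) are the same convex combinations of \((\varepsilon_L',\varepsilon_H')\) and \((\varepsilon_L'',\varepsilon_H'')\). By the mediant inequality, \(w''/w'\) lies between \(\varepsilon_L''/\varepsilon_L'\) and \(\varepsilon_H''/\varepsilon_H'\), hence in \([0,\eta_{\max}]\). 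Therefore \(\arg w=\arctan(w''/w')\in[0,\arctan\eta_{\max}]\).

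\textbf{Step 2: conclude.} Combining the two ranges gives \(|\psi|\le 3\arctan\eta_{\max}\). The worst case is \(\eta=-\eta_{\max}\) together with \(w''/w'=\eta_{\max}\), which can occur because no sign is assumed on \(\Delta''\). Under \eqref{eq:app_loss_hyp}, \(\eta_{\max}<1/4<1/\sqrt3\), so \(3\arctan\eta_{\max}<\pi/2\). Hence \(\cos\psi>0\), and with \(\Delta\neq0\) this gives \(\delta x>0\). Since \(\tan\) is increasing on \([0,\pi/2)\),
\[
    \frac{|\delta y|}{\delta x}=|\tan\psi|\le\tan(3\arctan\eta_{\max})=\frac{3\eta_{\max}-\eta_{\max}^3}{1-3\eta_{\max}^2}\le\frac{3\eta_{\max}}{1-3\eta_{\max}^2}.
\]
The final strict inequality \(3/(1-3\eta_{\max}^2)<4\) is equivalent to \(\eta_{\max}^2<1/12\), which follows from \(\eta_{\max}<1/4\). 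If \(\eta_{\max}=0\), the bound reduces to \(\delta y=0\), and the strict form is read accordingly.

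\textbf{Main obstacle.} The delicate step is Step 1. One must check that the denominators on both arcs are genuine convex combinations over the stated parameter ranges, so that the mediant argument applies. One must also track that the two angle contributions can add with the same sign when \(\Delta''<0\). This additive worst case is exactly what produces the factor \(3\), and hence the safety margin \(1/4\) in \eqref{eq:app_loss_hyp}. The remaining steps are routine trigonometry.
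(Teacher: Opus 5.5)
Your proof is correct. It uses the same two ingredients as the paper's proof. The first is the representation $\varepsilon_k-\varepsilon_b=f_Lf_H\Delta^2/D$. The second is the fact that both $\Delta''/\Delta'$ and the loss tangent of the convex-combination denominator are bounded by $\eta_{\max}$. The paper simply asserts $0\le D''\le\eta_{\max}D'$; your mediant argument justifies it explicitly.

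The difference lies in how the estimate is carried out. The paper works in Cartesian form. With $\gamma=D''/D'$ and $C=f_Lf_HD'(\Delta')^2/|D|^2$, it derives the exact identities $\delta x=C(1-\eta^2+2\gamma\eta)$ and $\delta y=C[\gamma(1-\eta^2)-2\eta]$. It then bounds them separately, $\delta x\ge C(1-3\eta_{\max}^2)$ and $|\delta y|\le 3\eta_{\max}C$, and takes the ratio. You instead bound the phase, $|\psi|\le 2\arctan\eta_{\max}+\arctan\eta_{\max}$, and apply $\tan$.

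Your route has two advantages. It shows geometrically where the factor $3$ comes from (two angle contributions from $\Delta^2$, one from $D$) and why $\delta x>0$ holds up to the threshold $\eta_{\max}<1/\sqrt3$. It also gives the sharper constant $\tan(3\arctan\eta_{\max})=(3\eta_{\max}-\eta_{\max}^3)/(1-3\eta_{\max}^2)$. This is exactly the ratio at the worst corner $\eta=-\eta_{\max}$, $\gamma=\eta_{\max}$. The paper loses the $-\eta_{\max}^3$ term when it estimates $\gamma(1-\eta^2)\le\eta_{\max}$. The paper's version, in exchange, is purely algebraic: it needs no bookkeeping of arguments or monotonicity of $\tan$, and it gives explicit formulas for $\delta x$ and $\delta y$.

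Your remark about $\eta_{\max}=0$ is also fair. There the final strict inequality in the statement degenerates to $0<0$, a case the paper's proof passes over silently.
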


\begin{proof}
Both boundary arcs in \eqref{eq:app_BM_boundary_arcs} can be written in the
unified form
\begin{equation}
    \varepsilon_b = \varepsilon_k - \frac{f_Lf_H\Delta^2}{D},
    \label{eq:app_boundary_unified_form}
\end{equation}
where
\begin{equation}
    D = 3\left[u\varepsilon_L+(1-u)\varepsilon_H\right] \quad\text{or}\quad D = 3\left[v\varepsilon_H+(1-v)\varepsilon_L\right].
    \label{eq:app_boundary_denominators}
\end{equation}
The parameters \(u\) and \(v\) range over the intervals stated in \eqref{eq:app_BM_boundary_arcs}. In both cases, the coefficients of $D/3$ are
non-negative and sum to one. Thus $D/3$ is a convex combination of the two constituent permittivities. Consequently, writing \(D=D'+iD''\),
\begin{equation}
    D'>0, \qquad 0\le D''\le \eta_{\max}D'.
    \label{eq:app_D_bounds}
\end{equation}
Define
\begin{equation}
    \eta:=\frac{\Delta''}{\Delta'}, \qquad \gamma:=\frac{D''}{D'}, \qquad C:= \frac{f_Lf_HD'(\Delta')^2}{|D|^2}>0.
    \label{eq:app_eta_gamma_C_def}
\end{equation}
Then, by \eqref{eq:app_eta_max_def} and \eqref{eq:app_D_bounds},
\begin{equation}
    |\eta|\le \eta_{\max}, \qquad 0\le\gamma\le\eta_{\max}.
    \label{eq:app_eta_gamma_bounds}
\end{equation}
Using \eqref{eq:app_boundary_unified_form} and separating real and imaginary parts gives the exact identities
\begin{equation}
    \delta x = C\left(1-\eta^2+2\gamma\eta\right),
    \label{eq:app_deltax_identity}
\end{equation}
and
\begin{equation}
    \delta y = C\left[\gamma(1-\eta^2)-2\eta\right].
    \label{eq:app_deltay_identity}
\end{equation}
From \eqref{eq:app_eta_gamma_bounds},
\begin{equation}
    \delta x \ge C\left(1-3\eta_{\max}^2\right)>0,
    \label{eq:app_deltax_lower_bound}
\end{equation}
and, since \(|\eta|<1\),
\begin{equation}
\begin{aligned}
    |\delta y| &\le C\left(\gamma(1-\eta^2)+2|\eta|\right) \\
    &\le C\left(\eta_{\max}+2\eta_{\max}\right) = 3\eta_{\max}C.
\end{aligned}
\label{eq:app_deltay_upper_bound}
\end{equation}
Combining \eqref{eq:app_deltax_lower_bound} and \eqref{eq:app_deltay_upper_bound} yields
\[
    |\delta y| \le \frac{3\eta_{\max}}{1-3\eta_{\max}^2}\,\delta x.
\]
Since \(\eta_{\max}<1/4\), the prefactor satisfies
\[
    \frac{3}{1-3\eta_{\max}^2}<4,
\]
and the final inequality in \eqref{eq:app_boundary_gap} follows.
\end{proof}

\begin{lemma}[From boundary separation to refractive-index separation]
\label{lem:app_boundary_index_gap}
For every \(0<f_H<1\) and every boundary point \(\varepsilon_b\in\partial\mathcal L(f_H)\),
\begin{equation}
    \operatorname{Re}\sqrt{\varepsilon_b} < \operatorname{Re}\sqrt{\varepsilon_k(f_H)}.
    \label{eq:app_boundary_index_gap}
\end{equation}
\end{lemma}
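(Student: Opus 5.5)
The plan is to reduce the comparison of real refractive indices to a comparison of moduli and real parts via Lemma~\ref{lem:app_polar_identity}, and then to control the moduli using the geometric separation of Lemma~\ref{lem:app_boundary_gap}. With \(\varepsilon_k=x_k+iy_k\), \(\varepsilon_b=x_b+iy_b\), \(\delta x=x_k-x_b\), \(\delta y=y_b-y_k\) as in Lemma~\ref{lem:app_boundary_gap}, the identity \(2(\operatorname{Re}\sqrt z)^2=|z|+x\) turns \eqref{eq:app_boundary_index_gap} into
\[
|\varepsilon_b|+x_b<|\varepsilon_k|+x_k \iff |\varepsilon_b|<|\varepsilon_k|+\delta x .
\]
Square roots are nonnegative, so squares compare the same way as the roots themselves. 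Lemma~\ref{lem:app_polar_identity} is stated for \(y\ge0\). However, the half-angle computation in its proof works for any \(x>0\) and any sign of \(y\), with \(\theta\in(-\pi/2,\pi/2)\). So I will only need \(x_b>0\) and \(x_k>0\), and no sign information on \(y_b\). This matters because \(\delta y\) may have either sign.

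Since \(\delta x>0\) by Lemma~\ref{lem:app_boundary_gap}, the right-hand side is positive. Squaring is therefore legitimate, and expanding \(|\varepsilon_b|^2=(x_k-\delta x)^2+(y_k+\delta y)^2\) reduces the claim to
\[
2y_k\,\delta y+\delta y^2<2\left(x_k+|\varepsilon_k|\right)\delta x .
\]
Because \(|\varepsilon_k|\ge x_k\), it suffices to prove that the left side is smaller than \(4x_k\,\delta x\). I will bound the two terms separately.

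\emph{First term.} Write \(\varepsilon_k=\varepsilon_L+f_H\Delta\), which is a convex combination of \(\varepsilon_L\) and \(\varepsilon_H\). By \eqref{eq:app_eta_max_def} this gives \(0\le y_k\le\eta_{\max}x_k\). Combined with \(|\delta y|<4\eta_{\max}\delta x\) from \eqref{eq:app_boundary_gap}, I get
\[
|2y_k\,\delta y|<8\eta_{\max}^2x_k\,\delta x .
\]

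\emph{Second term.} Lemma~\ref{lem:app_boundary_gap} gives \(\delta y^2<16\eta_{\max}^2\,\delta x^2\). If \(\delta x<x_k\), which is equivalent to \(x_b>0\), this becomes \(\delta y^2<16\eta_{\max}^2x_k\,\delta x\).

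Adding the two bounds gives a left side below \(24\eta_{\max}^2x_k\,\delta x\). Since \(\eta_{\max}<1/4\), this is less than \(\tfrac32x_k\,\delta x<4x_k\,\delta x\), which closes the argument.

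The step I expect to be the main obstacle is the auxiliary fact \(x_b>0\), equivalently \(\delta x<x_k\), for every boundary point. I would prove it directly from \eqref{eq:app_deltax_identity}. That identity gives
\[
\delta x=C(1-\eta^2+2\gamma\eta)\le C(1+2\eta_{\max}^2),
\qquad
C\le \frac{f_Lf_H(\Delta')^2}{D'} .
\]
Here \(D'=3[u\varepsilon_L'+(1-u)\varepsilon_H']\) (or the analogous expression in \(v\)), and \(D'\) is smallest at the endpoint \(u=1-f_L/3\), respectively \(v=f_L/3\).

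At that endpoint the lossless quantity
\[
x_k-\frac{f_Lf_H(\Delta')^2}{D'}
\]
is exactly the real Hashin--Shtrikman lower value. It exceeds \(\varepsilon_L'\), and a short computation should confirm that the margin is of order \(\varepsilon_L'\). The loss correction factor \(1+2\eta_{\max}^2\le 9/8\) only perturbs the subtracted term by a relative amount of order \(\eta_{\max}^2\). I therefore expect \(x_b>0\) to follow with room to spare. If this explicit check turns out to be messy, a fallback is to bound \(\delta x\) crudely by \(\tfrac98 f_H\Delta'\) using \(D'\ge3\varepsilon_L'\), and then compare that bound with \(x_k=\varepsilon_L'+f_H\Delta'\).
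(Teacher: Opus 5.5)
Your main line is correct and is essentially the paper's argument, rearranged. Since $2y_k\,\delta y+\delta y^2=y_b^2-y_k^2$, you and the paper bound the same quantity by $24\eta_{\max}^2x_k\,\delta x$. Both use the same three inputs: $0\le y_k\le\eta_{\max}x_k$, $|\delta y|<4\eta_{\max}\delta x$, and $\delta x<x_k$. You square $|\varepsilon_b|<|\varepsilon_k|+\delta x$ instead of multiplying by $|\varepsilon_k|+|\varepsilon_b|$. This lets you skip the case split on the sign of $\delta y$ and compare against $4x_k\,\delta x$ rather than $2x_k\,\delta x$. Your remark that the polar identity needs no sign condition on $y$ is also right; it holds for every $z$ on the principal branch. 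As in the paper, everything rests on $x_b>0$, which the paper simply asserts.

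The gap is in your justification of $x_b>0$. You bound $C\le f_Lf_H(\Delta')^2/D'$ and $1-\eta^2+2\gamma\eta\le 1+2\eta_{\max}^2$ separately. That gives only $\delta x\le(1+2\eta_{\max}^2)S$, with $S:=f_Lf_H(\Delta')^2/(3\varepsilon_L'+f_L\Delta')$, so you need $2\eta_{\max}^2S<x_k-S$. The margin $x_k-S$ (the real Hashin--Shtrikman lower value) stays below $\varepsilon_L'(1+3f_H/f_L)$. By contrast, $S\approx f_H\Delta'$ grows linearly with the contrast. A relative $O(\eta_{\max}^2)$ perturbation of $S$ is therefore not harmless when $\Delta'/\varepsilon_L'$ is large, and the lemma assumes only \eqref{eq:app_loss_hyp}. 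For instance, take $\varepsilon_L'=1$, $\varepsilon_H'=1001$, both loss tangents $0.2$, and $f_H=1/2$. Then $x_k=501$, $S\approx497$, and $(1+2\eta_{\max}^2)S\approx537>x_k$.

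Your fallback fails the same way: $\tfrac98f_H\Delta'<x_k$ requires $f_H\Delta'<8\varepsilon_L'$. Also, $D'\ge3\varepsilon_L'$ does not yield $C\le f_H\Delta'$; that needs $D'\ge f_L\Delta'$. For the SiO$_2$/TiO$_2$ constants ($\Delta'/\varepsilon_L'\approx1.9$, $\eta_{\max}\approx10^{-3}$) either estimate closes, but not in the stated generality.

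The repair is to keep the factor you discarded. Since $|D|^2=(D')^2(1+\gamma^2)$,
\[
\delta x=\frac{f_Lf_H(\Delta')^2}{D'}\,\frac{1-\eta^2+2\gamma\eta}{1+\gamma^2}\le\frac{f_Lf_H(\Delta')^2}{D'},
\]
because $1+\gamma^2-(1-\eta^2+2\gamma\eta)=(\gamma-\eta)^2\ge0$. Combined with $D'\ge3\varepsilon_L'+f_L\Delta'$, this gives $x_b\ge\varepsilon_L'+3f_H\Delta'\varepsilon_L'/(3\varepsilon_L'+f_L\Delta')>0$ for every boundary point and every contrast. This also supplies the step the paper leaves unproved.
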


\begin{proof}
Write $\varepsilon_k=x_k+iy_k$, $\varepsilon_b=x_b+iy_b$, and define $\delta x=x_k-x_b$, $\delta y=y_b-y_k$.
We first prove the identity
\begin{equation}
\begin{aligned}
&2\left[ \left(\operatorname{Re}\sqrt{\varepsilon_k}\right)^2 - \left(\operatorname{Re}\sqrt{\varepsilon_b}\right)^2 \right] \left(|\varepsilon_k|+|\varepsilon_b|\right) \\
&\qquad = \delta x\left(|\varepsilon_k|+|\varepsilon_b|+x_k+x_b\right) - \left(y_b^2-y_k^2\right).
\end{aligned}
\label{eq:app_square_root_difference_identity}
\end{equation}
By Lemma~\ref{lem:app_polar_identity},
\[
    2\left[ \left(\operatorname{Re}\sqrt{\varepsilon_k}\right)^2 - \left(\operatorname{Re}\sqrt{\varepsilon_b}\right)^2 \right] = \left(|\varepsilon_k|+x_k\right) - \left(|\varepsilon_b|+x_b\right).
\]
Since \(\delta x=x_k-x_b\), this becomes $|\varepsilon_k|-|\varepsilon_b|+ \delta x$. Multiplying by \(|\varepsilon_k|+|\varepsilon_b|\), we obtain
\begin{align*}
&2\left[ \left(\operatorname{Re}\sqrt{\varepsilon_k}\right)^2 - \left(\operatorname{Re}\sqrt{\varepsilon_b}\right)^2 \right] \left(|\varepsilon_k|+|\varepsilon_b|\right) \\
&\qquad = \left(|\varepsilon_k|-|\varepsilon_b|\right)\left(|\varepsilon_k|+|\varepsilon_b|\right) + \delta x\left(|\varepsilon_k|+|\varepsilon_b|\right).
\end{align*}
Using
\[
    \left(|\varepsilon_k|-|\varepsilon_b|\right)\left(|\varepsilon_k|+|\varepsilon_b|\right) = |\varepsilon_k|^2-|\varepsilon_b|^2,
\]
and
\[
    |\varepsilon_k|^2=x_k^2+y_k^2, \qquad |\varepsilon_b|^2=x_b^2+y_b^2,
\]
we get
\[
    |\varepsilon_k|^2-|\varepsilon_b|^2 = \delta x(x_k+x_b)- \left(y_b^2-y_k^2\right).
\]
Therefore,
\begin{align*}
&2\left[ \left(\operatorname{Re}\sqrt{\varepsilon_k}\right)^2 - \left(\operatorname{Re}\sqrt{\varepsilon_b}\right)^2 \right] \left(|\varepsilon_k|+|\varepsilon_b|\right) \\
&\qquad = \delta x(x_k+x_b)- \left(y_b^2-y_k^2\right) + \delta x\left(|\varepsilon_k|+|\varepsilon_b|\right) \\
&\qquad = \delta x\left(|\varepsilon_k|+|\varepsilon_b|+x_k+x_b\right) - \left(y_b^2-y_k^2\right),
\end{align*}
which proves \eqref{eq:app_square_root_difference_identity}.

By Lemma~\ref{lem:app_boundary_gap}, \(\delta x>0\). If \(y_b\le y_k\), then, since boundary points are passive, \(y_b\ge0\), and of course \(y_k\ge0\). Thus $y_b^2-y_k^2\le0$, and the right-hand side of \eqref{eq:app_square_root_difference_identity} is strictly positive.

It remains only to consider the upward-bulge case \(y_b>y_k\). Then \(\delta y>0\), and Lemma~\ref{lem:app_boundary_gap} gives
\begin{equation}
    0<\delta y<4\eta_{\max}\delta x.
    \label{eq:app_upward_delta_y_bound}
\end{equation}
Moreover, by the definition of \(\eta_{\max}\),
\begin{equation}
    y_k = f_L\varepsilon_L''+f_H\varepsilon_H'' \le \eta_{\max} \left(f_L\varepsilon_L'+f_H\varepsilon_H'\right) = \eta_{\max}x_k.
    \label{eq:app_yk_low_loss_bound}
\end{equation}
The Bergman--Milton lens lies in the right half-plane under the present passive low-loss assumptions, so \(x_b>0\). Hence
\begin{equation}
    0<\delta x<x_k.
    \label{eq:app_deltax_less_than_xk}
\end{equation}
Combining \eqref{eq:app_upward_delta_y_bound}, \eqref{eq:app_yk_low_loss_bound}, and \eqref{eq:app_deltax_less_than_xk},
\begin{equation}
    y_b+y_k = 2y_k+\delta y < 2\eta_{\max}x_k+4\eta_{\max}\delta x < 6\eta_{\max}x_k.
    \label{eq:app_yb_plus_yk_bound}
\end{equation}
Therefore
\begin{equation}
    y_b^2-y_k^2 = \delta y\,(y_b+y_k) < 24\eta_{\max}^2x_k\delta x.
    \label{eq:app_y_square_excess_bound}
\end{equation}
On the other hand,
\begin{equation}
    |\varepsilon_k|+|\varepsilon_b|+x_k+x_b = \left(|\varepsilon_k|+x_k\right) + \left(|\varepsilon_b|+x_b\right) \ge 2x_k.
    \label{eq:app_horizontal_term_lower_bound}
\end{equation}
Substituting \eqref{eq:app_y_square_excess_bound} and \eqref{eq:app_horizontal_term_lower_bound} into \eqref{eq:app_square_root_difference_identity} gives
\begin{equation}
\begin{aligned}
&\delta x\left(|\varepsilon_k|+|\varepsilon_b|+x_k+x_b\right) - \left(y_b^2-y_k^2\right) \\
&\qquad > 2x_k\delta x - 24\eta_{\max}^2x_k\delta x = 2x_k\delta x\left(1-12\eta_{\max}^2\right).
\end{aligned}
    \label{eq:app_final_positive_bound}
\end{equation}
Since \(\eta_{\max}<1/4\), the factor \(1-12\eta_{\max}^2\) is positive. Hence the right-hand side of \eqref{eq:app_square_root_difference_identity} is
strictly positive also in the upward-bulge case. Thus $(\operatorname{Re}\sqrt{\varepsilon_k})^2 > (\operatorname{Re}\sqrt{\varepsilon_b})^2$, and \eqref{eq:app_boundary_index_gap} follows.
\end{proof}

\begin{lemma}[Dominance over the whole isotropic admissible lens]
\label{lem:app_lens_dominance}
For every \(0<f_H<1\) and every admissible isotropic effective permittivity \(\varepsilon_{\rm eff}^{\rm ISO}\in\mathcal L(f_H)\),
\begin{equation}
    \operatorname{Re}\sqrt{\varepsilon_{\rm eff}^{\rm ISO}} < \operatorname{Re}\sqrt{\varepsilon_k(f_H)}.
    \label{eq:app_lens_dominance}
\end{equation}
\end{lemma}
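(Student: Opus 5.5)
The plan is to pass from the boundary statement of Lemma~\ref{lem:app_boundary_index_gap} to the whole lens through a maximum-principle argument for the real part of the square root.

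The function $g(z):=\operatorname{Re}\sqrt{z}$, with the principal branch, is harmonic on the slit plane $\mathbb{C}\setminus(-\infty,0]$, because $\sqrt{z}$ is holomorphic there. The lens $\mathcal L(f_H)$ is a compact set bounded by the two circular arcs \eqref{eq:app_BM_boundary_arcs}. As used in the proof of Lemma~\ref{lem:app_boundary_index_gap}, the lens lies in the closed upper half plane, by passivity, and in the open right half plane, by the low-loss hypothesis. Hence $\mathcal L(f_H)$ sits inside the domain of harmonicity of $g$. If I want to make the containment in the right half plane explicit, I would use the estimate $x_b\ge x_k-\delta x$ together with the fact that every boundary point is a convex-type combination whose real part is bounded below by the Hashin--Shtrikman-like value $\varepsilon_L'\cdot(\text{positive})$. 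I would then apply the maximum principle for harmonic functions: the maximum of $g$ over the compact set $\mathcal L(f_H)$ is attained on $\partial\mathcal L(f_H)$. For any $\varepsilon^{\rm ISO}_{\rm eff}\in\mathcal L(f_H)$ this gives
\[
\operatorname{Re}\sqrt{\varepsilon^{\rm ISO}_{\rm eff}}\le\max_{\varepsilon_b\in\partial\mathcal L(f_H)}\operatorname{Re}\sqrt{\varepsilon_b}.
\]

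The boundary $\partial\mathcal L(f_H)$ is compact and $g$ is continuous on it, so the maximum on the right is attained at some boundary point $\varepsilon_b^\ast$. Lemma~\ref{lem:app_boundary_index_gap} gives the strict inequality $\operatorname{Re}\sqrt{\varepsilon_b^\ast}<\operatorname{Re}\sqrt{\varepsilon_k(f_H)}$. Chaining the two inequalities yields \eqref{eq:app_lens_dominance} with strict inequality, including for interior points. Compactness is what keeps strictness: no supremum argument is needed.

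The main subtlety I expect is justifying that the region bounded by the two arcs is a genuine Jordan domain containing the lens, so that the maximum principle applies on it. I would first note that each arc is the image of a real segment under a Möbius map $u\mapsto \varepsilon_k-f_Lf_H\Delta^2/(3[u\varepsilon_L+(1-u)\varepsilon_H])$, hence a circular arc. The two arcs share only their endpoints, the doubly coated sphere points, so together they form a simple closed curve enclosing $\mathcal L(f_H)$. The degenerate lossless case, in which the lens collapses to a real segment, is handled directly: on a segment of the positive real axis, $g(x)=\sqrt{x}$ is monotone, so its maximum is at an endpoint, and those endpoints are boundary points already covered by Lemma~\ref{lem:app_boundary_index_gap}.
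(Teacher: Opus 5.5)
Your proposal is correct and follows the paper's argument: $\operatorname{Re}\sqrt{z}$ is harmonic on a domain containing the compact lens $\mathcal L(f_H)$, so the maximum principle reduces the claim to $\partial\mathcal L(f_H)$, and Lemma~\ref{lem:app_boundary_index_gap} gives the strict boundary inequality there. Your added points (the two arcs forming a simple closed curve, attainment of the boundary maximum by compactness, and the degenerate segment case) make explicit steps the paper leaves implicit. Your sketch of why the lens lies in the right half-plane is vague, but the paper only asserts that containment too.
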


\begin{proof}
The principal square root is analytic in the right half-plane. Hence $\Phi(z):=\operatorname{Re}\sqrt z$
is harmonic there. The Bergman--Milton lens \(\mathcal L(f_H)\) is compact and lies in the right half-plane under the passive low-loss assumptions above.
Therefore, by the maximum principle for harmonic functions, the maximum of \(\Phi\) over \(\mathcal L(f_H)\) is attained on \(\partial\mathcal L(f_H)\). Lemma~\ref{lem:app_boundary_index_gap} shows that
every boundary value is strictly smaller than \(\operatorname{Re}\sqrt{\varepsilon_k(f_H)}\). The same strict inequality therefore holds throughout the whole admissible lens.
\end{proof}

\begin{theorem}[Strict inequality of the required high-index volume fractions]
\label{thm:app_fraction_inequality}
Let \(n_*\) be a strictly interior target real refractive index, $n_L < n_* < n_H$, attained by both architectures
with volume fractions in \((0,1)\). Let \(f_H^{\rm NL}\) be the high-index fraction required by the nanolaminate and let \(f_H^{\rm ISO}\) be the
high-index fraction required by an admissible macroscopically isotropic morphology:
\begin{equation}
    \operatorname{Re}\sqrt{\varepsilon_k(f_H^{\rm NL})}=n_*,
    \qquad
    \operatorname{Re}\sqrt{\varepsilon_{\rm eff}^{\rm ISO}(f_H^{\rm ISO})} = n_*.
    \label{eq:app_target_index_equalities}
\end{equation}
Then, under the low-loss hypothesis \eqref{eq:app_loss_hyp},
\begin{equation}
    f_H^{\rm NL}<f_H^{\rm ISO}.
    \label{eq:app_fraction_inequality}
\end{equation}
\end{theorem}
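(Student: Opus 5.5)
The plan is to combine Lemma~\ref{lem:app_lens_dominance} with the strict monotonicity of Lemma~\ref{lem:app_NL_monotonicity}. We argue by contradiction and suppose that $f_H^{\rm ISO}\le f_H^{\rm NL}$.

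First, evaluate the nanolaminate at the isotropic fraction. The isotropic effective permittivity $\varepsilon_{\rm eff}^{\rm ISO}(f_H^{\rm ISO})$ is admissible, so it lies in the Bergman--Milton lens $\mathcal L(f_H^{\rm ISO})$. Since $f_H^{\rm ISO}\in(0,1)$ by hypothesis, Lemma~\ref{lem:app_lens_dominance} applies at $f_H=f_H^{\rm ISO}$ and gives
\[
n_*=\operatorname{Re}\sqrt{\varepsilon_{\rm eff}^{\rm ISO}(f_H^{\rm ISO})}<\operatorname{Re}\sqrt{\varepsilon_k(f_H^{\rm ISO})}=F_{\rm NL}(f_H^{\rm ISO}).
\]

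Second, bring in monotonicity. Lemma~\ref{lem:app_NL_monotonicity} says $F_{\rm NL}$ is strictly increasing on $(0,1)$. Under the assumption $f_H^{\rm ISO}\le f_H^{\rm NL}$, this yields
\[
F_{\rm NL}(f_H^{\rm ISO})\le F_{\rm NL}(f_H^{\rm NL})=n_*.
\]
Chaining the two displays gives $n_*<n_*$, which is a contradiction. Hence $f_H^{\rm NL}<f_H^{\rm ISO}$.

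An equivalent direct route is to note that $F_{\rm NL}(f_H^{\rm ISO})>n_*=F_{\rm NL}(f_H^{\rm NL})$ and invoke strict monotonicity, so that $f_H^{\rm ISO}>f_H^{\rm NL}$.

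The main obstacle is bookkeeping rather than analysis. We must check that the hypotheses of the earlier lemmas hold at the relevant fraction. In particular, Lemma~\ref{lem:app_lens_dominance} must be applied at $f_H^{\rm ISO}$, not at $f_H^{\rm NL}$, and this is legitimate only because $f_H^{\rm ISO}$ lies in the open interval $(0,1)$, as the theorem assumes.

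We should also note that the Wiener point is well defined. Under the low-loss hypothesis \eqref{eq:app_loss_hyp} its real part is positive, so the principal branch of the square root is used consistently in both lemmas. At the level of this theorem, no further estimate beyond those already proved is needed.
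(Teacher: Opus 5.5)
Your proposal is correct and matches the paper's proof: apply Lemma~\ref{lem:app_lens_dominance} at $f_H=f_H^{\rm ISO}$ to get $n_*<F_{\rm NL}(f_H^{\rm ISO})$, then use $n_*=F_{\rm NL}(f_H^{\rm NL})$ and the strict monotonicity of Lemma~\ref{lem:app_NL_monotonicity} to conclude $f_H^{\rm NL}<f_H^{\rm ISO}$. The contradiction framing is just a cosmetic repackaging of the same direct argument.
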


\begin{proof}
By Lemma~\ref{lem:app_lens_dominance}, evaluated at \(f_H=f_H^{\rm ISO}\),
\[
    n_* = \operatorname{Re}\sqrt{\varepsilon_{\rm eff}^{\rm ISO}(f_H^{\rm ISO})} < \operatorname{Re}\sqrt{\varepsilon_k(f_H^{\rm ISO})} = F_{\rm NL}(f_H^{\rm ISO}).
\]
By definition of \(f_H^{\rm NL}\), $n_*=F_{\rm NL}(f_H^{\rm NL})$. Therefore $F_{\rm NL}(f_H^{\rm NL})<F_{\rm NL}(f_H^{\rm ISO})$.
Since \(F_{\rm NL}\) is strictly increasing by Lemma~\ref{lem:app_NL_monotonicity}, the inequality of the arguments follows: $f_H^{\rm NL}<f_H^{\rm ISO}$.
\end{proof}

\noindent {\bf Final Remark:} Optical origin of the high-index saving\\
The ordered nanolaminate uses the Wiener upper optical response to minimize
the amount of high-index phase required to reach a prescribed real refractive
index. Since TiO$_2$ is the mechanically lossier constituent in the material
system considered here, the strict inequality
\[
    f_H^{\rm NL}<f_H^{\rm ISO}
\]
provides the optical-topological mechanism behind the reduction in high-index
material content used in the main-text dissipation comparison.

\section{Algebraic certification of the boundary separation}
\label{app:algebraic_certification}

We begin with a general boundary criterion that extends the argument of
Appendix~A beyond the explicit low-loss estimate. It shows that the desired
separation follows once a suitable boundary maximum is certified to be
strictly negative.

\begin{proposition}[Certified boundary criterion beyond the explicit low-loss bound]
Fix the constituent permittivities \(\varepsilon_L,\varepsilon_H\) and a
volume fraction \(0<f_H<1\). Define
\[
    M(f_H) := \max_{\varepsilon\in\partial\mathcal L(f_H)} \operatorname{Re}\sqrt{\varepsilon} - \operatorname{Re}\sqrt{\varepsilon_k(f_H)} .
\]
If $M(f_H)<0$ for all \(f_H\) in the interval \(I = (0,1)\), then
\[
    \operatorname{Re}\sqrt{\varepsilon_{\mathrm{eff}}^{\mathrm{ISO}}} < \operatorname{Re}\sqrt{\varepsilon_k(f_H)}
\]
for every admissible isotropic effective permittivity \(\varepsilon_{\mathrm{eff}}^{\mathrm{ISO}}\in\mathcal L(f_H)\) and every \(f_H\in I\).
\end{proposition}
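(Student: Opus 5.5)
The argument is the same maximum-principle step used in Lemma~\ref{lem:app_lens_dominance}, but now the boundary inequality comes from the hypothesis $M(f_H)<0$ instead of from Lemma~\ref{lem:app_boundary_index_gap}. No low-loss estimate is needed for the boundary itself. The only structural facts required are that $\mathcal L(f_H)$ is compact and that it lies in a region where $\Phi(z)=\operatorname{Re}\sqrt z$ is harmonic.

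\textbf{Step 1: location of the lens.} Fix $f_H\in I$. I would show that $\mathcal L(f_H)$ is a compact set, bounded by the two circular arcs \eqref{eq:app_BM_boundary_arcs}, lying in the closed upper half-plane and avoiding the branch cut $(-\infty,0]$ of the principal square root.

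Passivity gives $\operatorname{Im}\varepsilon\ge0$ for all admissible permittivities, because the lens is contained in the convex hull of the admissible values. Together with $\varepsilon_L',\varepsilon_H'>0$, this places the lens in the open set $\mathbb C\setminus(-\infty,0]$. There $\sqrt z$ is analytic, so $\Phi$ is harmonic and continuous up to the compact lens.

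This is the step I expect to be the main obstacle, since the proposition explicitly drops the low-loss bound. If one cannot verify in general that the lens avoids the negative real axis, I would include it as a standing assumption, as the paper already does ("under the present passive low-loss assumptions"). Alternatively, I would argue that both arcs are images of segments of the convex combinations $u\varepsilon_L+(1-u)\varepsilon_H$ under a Möbius map. These combinations lie in the open first quadrant, and for physically passive data the lens stays in the upper half-plane with real part bounded away from $-\infty$. Only the exclusion of the ray $(-\infty,0]$ is actually needed.

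\textbf{Step 2: maximum principle.} Since $\Phi$ is harmonic on a neighbourhood of the compact set $\mathcal L(f_H)$, the weak maximum principle gives
\[
\max_{\mathcal L(f_H)}\Phi=\max_{\partial\mathcal L(f_H)}\Phi .
\]
A degenerate lens with empty interior causes no trouble, because then the lens equals its own boundary. By the definition of $M(f_H)$ and the hypothesis $M(f_H)<0$,
\[
\max_{\partial\mathcal L(f_H)}\Phi=\operatorname{Re}\sqrt{\varepsilon_k(f_H)}+M(f_H)<\operatorname{Re}\sqrt{\varepsilon_k(f_H)} .
\]
The maximum defining $M$ is attained because $\partial\mathcal L$ is compact and $\Phi$ is continuous on it.

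\textbf{Step 3: conclusion.} Every $\varepsilon_{\rm eff}^{\rm ISO}\in\mathcal L(f_H)$ therefore satisfies the claimed strict inequality. Since $f_H\in I$ was arbitrary, the statement holds for all $f_H\in I$. It can then be fed directly into the proof of Theorem~\ref{thm:app_fraction_inequality} in place of Lemma~\ref{lem:app_lens_dominance}. That application also needs the monotonicity of Lemma~\ref{lem:app_NL_monotonicity}, or an independently certified version of it.
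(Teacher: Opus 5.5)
Your argument is the paper's proof: $\operatorname{Re}\sqrt z$ is harmonic near the compact lens, so the maximum principle reduces its maximum over $\mathcal L(f_H)$ to the boundary value $\operatorname{Re}\sqrt{\varepsilon_k(f_H)}+M(f_H)<\operatorname{Re}\sqrt{\varepsilon_k(f_H)}$, and the paper, like your fallback, simply asserts without proof that the lens lies in the right half-plane. The inference in your Step~1 does not hold as written, because $\operatorname{Im}\varepsilon\ge0$ together with $\varepsilon_L',\varepsilon_H'>0$ does not by itself keep the lens off $(-\infty,0]$; however, you can drop the location question altogether, since $\operatorname{Re}\sqrt z=\sqrt{(|z|+\operatorname{Re}z)/2}$ is continuous and subharmonic on all of $\mathbb C$ (near any $z_0\neq0$ it equals $|\operatorname{Re}g|$ for a local holomorphic branch $g$ of $\sqrt z$, and at $0$ it vanishes while being nonnegative), so the maximum principle holds on any compact set wherever the lens lies.
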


\begin{proof}
The function $\Phi(z)=\operatorname{Re}\sqrt z$
is harmonic in the right half-plane ($\operatorname{Re}[z]>0$) . 
Since the Bergman--Milton lens \(\mathcal L(f_H)\) is compact and lies in this half-plane for passive
dielectrics in the regime considered here, the maximum principle gives
\[
    \max_{\varepsilon\in\mathcal L(f_H)} \operatorname{Re}\sqrt{\varepsilon} = \max_{\varepsilon\in\partial\mathcal L(f_H)} \operatorname{Re}\sqrt{\varepsilon}.
\]
Thus, if \(M(f_H)<0\), every point of the admissible lens satisfies $\operatorname{Re}\sqrt{\varepsilon} < \operatorname{Re}\sqrt{\varepsilon_k(f_H)}$.
\end{proof}

\begin{remark}[Computer-assisted certification]
The criterion above is independent of the explicit sufficient condition \(\eta_{\max}<1/4\). Since the Bergman--Milton boundary is the union of two
fractional-linear arcs, the boundary maximum \(M(f_H)\) can be reduced to one-dimensional analytic problems on compact intervals. This makes the
criterion suitable for certified computer-assisted verification. 
Such a verification may be carried out either via rigorous numerical enclosures based on interval analysis \cite{Moore2009, Julia2017}, 
or by exact real-algebraic quantifier elimination after rewriting the boundary-separation condition as the non-intersection of algebraic curves.
The latter approach is used below for the SiO$_2$/TiO$_2$ optical constants considered in this work.
\end{remark}

\subsection{Computer aided Verification}
We now apply the real-algebraic route to the specific SiO$_2$/TiO$_2$
optical constants used throughout the paper. The goal is to certify directly,
without using the explicit low-loss estimate of Appendix~\ref{app:min_fH},
that the Bergman--Milton boundary does not intersect the level curve
\[
    \operatorname{Re}\sqrt z=\operatorname{Re}\sqrt{\varepsilon_k(f_H)}
\]
for any \(0<f_H<1\). 
This provides a material-specific certification of the boundary separation
\[
    \operatorname{Re}\sqrt{\varepsilon_b} < \operatorname{Re}\sqrt{\varepsilon_k(f_H)}, \qquad \varepsilon_b\in\partial\mathcal L(f_H).
\]
Let $z=x+iy$ and $\varepsilon_k(f_H)=x_k+iy_k$. By the polar identity,
\[
    2\left(\operatorname{Re}\sqrt z\right)^2=|z|+x.
\]
Therefore the level curve $\operatorname{Re}\sqrt z=\operatorname{Re}\sqrt{\varepsilon_k(f_H)}$ is equivalently given by
\[
    |z|+x=|\varepsilon_k|+x_k.
\]
Introducing $c:=|\varepsilon_k|+x_k>0$, one obtains the algebraic relation
\[
    c^2-2x_kc-y_k^2=0,
\]
and the level curve becomes the parabola
\[
    y^2-c^2+2cx=0.
\]

Each Bergman--Milton boundary branch has the rational form
\[
    \varepsilon_b(t) = \varepsilon_k(f_H) - \frac{f_Lf_H\Delta^2}{D_B(t)},
\]
where \(t=u\) on the \(LH\) branch and \(t=v\) on the \(HL\) branch.
For each branch \(B\in\{LH,HL\}\) the denominator \(D_B(t)\) is affine in \(t\), with
\[
    D_{LH}(t)=3\bigl[t\,\varepsilon_L+(1-t)\varepsilon_H\bigr],\quad D_{HL}(t)=3\bigl[t\,\varepsilon_H+(1-t)\varepsilon_L\bigr].
\]
Writing $D_B(t)=D_r(t)+iD_i(t)$, $Q_B(t):=D_r(t)^2+D_i(t)^2$, and $f_Lf_H\Delta^2=A_r+iA_i$,
one obtains $\varepsilon_b(t)=x_b(t)+iy_b(t)$, with
\[
    x_b(t)=\frac{N_x(t)}{Q_B(t)}, \qquad y_b(t)=\frac{N_y(t)}{Q_B(t)},
\]
where
\[
    N_x(t) = x_k Q_B(t)-\left[A_rD_r(t)+A_iD_i(t)\right],
\]
and
\[
    N_y(t) = y_k Q_B(t)-\left[A_iD_r(t)-A_rD_i(t)\right].
\]

After clearing denominators, the condition that a boundary point reaches
or crosses the level curve is $P_B(t,c,f_H)\ge 0$, where
\[
    P_B(t,c,f_H) = N_y(t)^2-c^2Q_B(t)^2+2cN_x(t)Q_B(t).
\]
Indeed,
\[
    \frac{P_B(t,c,f_H)}{Q_B(t)^2} = y_b(t)^2-c^2+2cx_b(t),
\]
so \(P_B(t,c,f_H)\ge0\) is equivalent to $\operatorname{Re}\sqrt{\varepsilon_b(t)} \ge \operatorname{Re}\sqrt{\varepsilon_k(f_H)}$.

The algebraic certification consists of deciding, for each boundary branch,
the real existential formula:
\begin{align*}
\exists\, f_H,t,c: \quad & 0<f_H<1,\quad t\in I_B(f_H),\quad c>0, \\
& C(c,f_H) = c^2-2x_kc-y_k^2=0, \quad P_B(t,c,f_H)\ge0.
\end{align*}
Here $I_B$ can be one of the following closed intervals:
\[
    I_{LH}(f_H) = \left[ \frac{f_H}{3}, 1-\frac{f_L}{3} \right], \qquad I_{HL}(f_H) = \left[ \frac{f_L}{3}, 1-\frac{f_H}{3} \right].
\]
Using exact rational input values for complex refractive indices
\[
    n_L = \frac{145}{100}, \quad \kappa_L = \frac{1}{10^8}, \quad n_H = \frac{247}{100}, \quad \kappa_H = \frac{1}{1000},
\]
with $\varepsilon_i'=n_i^2-\kappa_i^2$ and $\varepsilon_i''=2n_i\kappa_i$,
this real-algebraic decision problem was evaluated using \textit{Wolfram Mathematica} by real-algebraic quantifier elimination.
For both the \(LH\) and \(HL\) branches, the existential formula returned \texttt{False}. Hence no point of either
Bergman--Milton boundary branch satisfies $\operatorname{Re}\sqrt{\varepsilon_b} \ge \operatorname{Re}\sqrt{\varepsilon_k(f_H)}$.
Therefore, $\operatorname{Re}\sqrt{\varepsilon_b} < \operatorname{Re}\sqrt{\varepsilon_k(f_H)}$ for every \(0<f_H<1\) and every \(\varepsilon_b\in\partial\mathcal L(f_H)\).

As a complementary geometric certification, the auxiliary variable \(c\) was formally eliminated by computing 
the Sylvester resultant \cite{BPR2006, CLO2015} of the two polynomials:
\[
    R_B(t,f_H) = \operatorname{Res}_c \Bigl( P_B(t,c,f_H), \,\, C(c,f_H)=c^2-2x_kc-y_k^2 \Bigr).
\]
The corresponding real-root problem,
\[
    R_B(t,f_H)=0, \qquad 0<f_H<1, \qquad t\in I_B(f_H),
\]
was also decided by quantifier elimination and returned \texttt{False} for both branches. 
This class of problems is deterministically solvable via the Cylindrical Algebraic Decomposition (CAD) algorithm, originally introduced 
by Collins \cite{Collins1975} and extensively developed for modern real algebraic geometry \cite{BPR2006}.
This confirms algebraically that the two Bergman--Milton arcs do not intersect the level parabola.

\subsection*{Pseudocode of the algebraic certification}

The algorithm below certifies algebraically that no point of either
Bergman--Milton boundary arc can attain a real refractive index equal
to or greater than that of the Wiener upper bound, for any volume
fraction $f_H\in(0,1)$.

Two auxiliary objects drive the construction.
The \emph{level-curve constraint} 
\[
C(c,f_H):=c^2-2x_kc-y_k^2=0 \quad \text{with}\quad c>0
\]
is the polynomial encoding of the scalar
$c:=|\varepsilon_k|+x_k$, which by the polar identity
$2(\operatorname{Re}\sqrt{z})^2=|z|+\operatorname{Re}(z)$
parametrises the level curve
$\operatorname{Re}\sqrt{z}=\operatorname{Re}\sqrt{\varepsilon_k(f_H)}$
in the complex $\varepsilon$-plane.
The square-root definition of $c$ is non-polynomial and therefore
unsuitable for quantifier elimination; the constraint $C=0$ replaces
it with an equivalent algebraic relation.

The \emph{separation polynomial}
\[
P_B(t,c,f_H):=N_y(t)^2-c^2Q_B(t)^2+2c\,N_x(t)Q_B(t)
\]
is obtained by clearing the denominator $Q_B(t)^2>0$ from the
parabolic inequality $y_b(t)^2-c^2+2c\,x_b(t)\ge0$, which is itself
the algebraic form of the condition
$\operatorname{Re}\sqrt{\varepsilon_b(t)}\ge \operatorname{Re}\sqrt{\varepsilon_k(f_H)}$.
Hence $P_B\ge0$ signifies that the boundary point $\varepsilon_b(t)$
violates the separation, i.e.\ the isotropic admissible set
reaches the Wiener upper bound.
The certification succeeds if the existential formula
$\exists\,f_H,t,c:C=0\wedge P_B\ge0$
is decided \texttt{False} for both branches by quantifier
elimination, establishing that the separation is strict everywhere.

\vspace{10pt}\par\noindent
\hrule height 1pt
\vspace{4pt}
\noindent\textbf{Algorithm 1} Algebraic Certification of Boundary Separation
\vspace{4pt}
\hrule height 0.5pt
\vspace{4pt}
\begin{algorithmic}[1]
\setlength{\itemsep}{1.2pt}

\Statex {\footnotesize\itshape $\triangleright$~Phase~1: set up constituent permittivities}
\State Set material constants $n_L,\kappa_L,n_H,\kappa_H$ as exact rational numbers.
\State Compute $\varepsilon_i'=n_i^2-\kappa_i^2$ and $\varepsilon_i''=2n_i\kappa_i$
       \quad for $i\in\{L,H\}$.
\State Define $f_L=1-f_H$,\quad $\Delta=\varepsilon_H-\varepsilon_L$,\quad $\varepsilon_k=f_L\varepsilon_L+f_H\varepsilon_H$.
\State Set $x_k=\operatorname{Re}\varepsilon_k$,\quad $y_k=\operatorname{Im}\varepsilon_k$,\quad $A=f_Lf_H\Delta^2=A_r+iA_i$.

\Statex {}
\Statex {\footnotesize\itshape $\triangleright$~Phase~2: test each Bergman--Milton boundary branch}
\For{each branch $B\in\{LH,\,HL\}$}
    \Statex \quad{\footnotesize\itshape $\triangleright$ Arc:\; $\varepsilon_b(t)=\varepsilon_k - f_Lf_H\Delta^2/D_B(t)$}
    \State Define the affine denominator $D_B(t)=D_r(t)+iD_i(t)$, where:
    \Statex \qquad $D_{LH}(t)=3\bigl[t\,\varepsilon_L+(1-t)\varepsilon_H\bigr],
                   \quad t\in I_{LH}(f_H)=\bigl[\tfrac{f_H}{3},\,1-\tfrac{f_L}{3}\bigr],$
    \Statex \qquad $D_{HL}(t)=3\bigl[t\,\varepsilon_H+(1-t)\varepsilon_L\bigr],
                   \quad t\in I_{HL}(f_H)=\bigl[\tfrac{f_L}{3},\,1-\tfrac{f_H}{3}\bigr].$
    \State Set $Q_B(t)=D_r(t)^2+D_i(t)^2$.
    \State Compute the $\varepsilon_b(t)$ numerator polynomials:
    \Statex \qquad $N_x(t)=x_k Q_B(t)-\bigl[A_r D_r(t)+A_i D_i(t)\bigr],$
    \Statex \qquad $N_y(t)=y_k Q_B(t)-\bigl[A_i D_r(t)-A_r D_i(t)\bigr].$
    \State Construct the separation polynomial:
    \Statex \qquad $P_B(t,c,f_H)=N_y^2-c^2 Q_B^2+2cN_x Q_B$.
    \State Define the level-curve constraint:
    \Statex \qquad $C(c,f_H)\;:=\;c^2-2x_k c-y_k^2=0,\quad c>0$.
    \State \textbf{Decide} over $\mathbb{R}$ the existential formula:
    \Statex \qquad $\exists\,f_H,t,c:\;0<f_H<1,\;\; t\in I_B(f_H),$
    \Statex \qquad\qquad\quad $C(c,f_H)=0,\;\; P_B(t,c,f_H)\ge0.$
    \State Record the output of the quantifier-elimination routine.
    \State Compute resultant $R_B(t,f_H)=\operatorname{Res}_c\!\bigl(P_B,\,C\bigr)$.
    \State \textbf{Decide} whether $R_B(t,f_H)=0$ admits a real solution with $0<f_H<1$ and $t\in I_B(f_H)$.
\EndFor

\Statex {}
\Statex {\footnotesize\itshape $\triangleright$~Phase~3: conclude}
\State \textbf{Conclude} boundary separation if both decision problems return $\mathrm{False}$ for both branches $B\in\{LH,\,HL\}$.

\end{algorithmic}
\vspace{4pt}
\hrule height 1pt
\vspace{14pt}\par
\[
\,\,\,
\]

\end{document}